\newif\ifreview
\newtheorem{fact}{Fact}[section]
\newtheorem{claim}{Claim}[section]
\newtheorem{remark}{Remark}[section]
\newtheorem{theorem}{Theorem}[section]
\newtheorem{lemma}{Lemma}[section]
\newtheorem{definition}{Definition}[section]
\newtheorem{proposition}{Proposition}[section]
\renewcommand{\epsilon}{\varepsilon}
\newcommand{\eps}{\varepsilon}
\newcommand{\eat}[1]{}
\newcommand{\R}{\mathbb{R}}
\newcommand{\Z}{\mathbb{Z}}
\newcommand{\poly}{\operatorname{poly}}
\newcommand{\OPT}{\ensuremath{\mathsf{OPT}}\xspace}
\newcommand{\APX}{\ensuremath{\mathsf{APX}}\xspace}
\newcommand{\ignore}[1]{}
\newcommand{\kdist}{{\mathcal{K}}}
\newcommand{\calB}{{\mathcal{B}}}
\newcommand{\calI}{{\mathcal{I}}}
\newcommand{\calJ}{{\mathcal{J}}}
\newcommand{\calP}{{\mathcal{P}}}
\newcommand{\calZ}{{\mathcal{Z}}}
\title{\bf  Coresets for
	Clustering with Fairness Constraints}
\author{Lingxiao Huang\thanks{EPFL, Switzerland. Email: huanglingxiao1990@126.com} \and Shaofeng H.-C. Jiang\thanks{Weizmann Institute of Science, Israel. Email: shaofeng.jiang@weizmann.ac.il} \and Nisheeth K. Vishnoi\thanks{Yale University, USA. Email: nisheeth.vishnoi@yale.edu}}
\begin{document}

	\maketitle

	\begin{abstract}
In a recent work, \cite{chierichetti2017fair} studied the following ``fair'' variants of classical clustering problems such as $k$-means and $k$-median: given a set of $n$ data points in $\mathbb{R}^d$ and a binary type associated to each data point, the goal is to cluster the points while ensuring that the proportion of each type in each cluster is roughly the same as its underlying proportion.
Subsequent work has focused on either extending this setting to when each data point has multiple, non-disjoint sensitive types such as race and gender \cite{bera2019fair}, or to address the problem that the clustering algorithms in the above work do not scale well ~\cite{schmidt2018fair,bercea2018cost,backurs2019scalable}.
The main contribution of this paper is an approach to clustering with fairness constraints that involve {\em multiple, non-disjoint} types, that is {\em also scalable}.
Our approach is based on novel constructions of coresets: for the $k$-median objective, we construct an $\eps$-coreset of size $O(\Gamma k^2 \eps^{-d})$ where $\Gamma$ is the number of distinct collections of groups that a point may belong to, and
for the $k$-means objective, we show how to construct an $\eps$-coreset of size $O(\Gamma k^3\eps^{-d-1})$.
The former  result is the first known coreset construction for the fair clustering problem with the $k$-median objective, and the latter result removes the dependence on the size of the full dataset as in~\cite{schmidt2018fair} and generalizes it to multiple, non-disjoint types.
Plugging our coresets into existing algorithms for fair clustering such as \cite{backurs2019scalable} results in the fastest algorithms for several cases.
Empirically, we assess our approach over the \textbf{Adult}, \textbf{Bank}, \textbf{Diabetes} and \textbf{Athlete} dataset, and show that the coreset sizes are much smaller than the full dataset; applying coresets indeed accelerates the running time of computing the fair clustering objective while ensuring that the resulting objective difference is small.
We also achieve a speed-up to recent fair clustering algorithms~\cite{backurs2019scalable,bera2019fair} by incorporating our coreset construction.
	\end{abstract}
	
	\newpage
	
	\tableofcontents
	
	\newpage

	\section{Introduction}
	\label{sec:introduction}
	
	Clustering algorithms are widely used in automated decision-making tasks, e.g., unsupervised learning~\cite{tan2006cluster}, feature engineering~\cite{jiang2008clustering,glassman2014feature}, and recommendation systems~\cite{burke2011recommender,pham2011clustering,das2014clustering}.
With the increasing applications of clustering algorithms in human-centric contexts, there is a growing concern 	that, if left unchecked, they can lead to discriminatory outcomes for protected groups, e.g., females/black people.
	For instance,  the proportion of a minority group assigned to some cluster can be far from its underlying proportion, even if clustering algorithms do not take the sensitive attribute into its decision making~\cite{chierichetti2017fair}.
Such an outcome may, in turn, lead to unfair treatment of minority groups, e.g., women may receive proportionally fewer job recommendations with high salary~\cite{datta2015automated,miller2015can} due to their underrepresentation  in the cluster of high salary recommendations.

	To address this issue, Chierichetti et al.~\cite{chierichetti2017fair} recently proposed the fair clustering problem that requires the clustering assignment to be \emph{balanced} with respect to a binary sensitive type, e.g., sex.\footnote{A type consists of several disjoint groups, e.g., the sex type consists of females and males.}
	Given a set $X$ of $n$ data points in $\R^d$ and a binary type associated to each data point, the goal is to cluster the points such that  the proportion of each type in each cluster is roughly the same as its underlying proportion, while ensuring  that the clustering objective is minimized.
	Subsequent work has focused on either extending this setting to when each data point has multiple, non-disjoint sensitive types~\cite{bera2019fair} (Definition~\ref{def:fair_clustering}), or to address the problem that the clustering algorithms do not scale well ~\cite{chierichetti2017fair,rosner2018privacy,schmidt2018fair,bercea2018cost,backurs2019scalable}.

	Due to the large scale of datasets, several existing fair clustering algorithms have to take samples instead of using the full dataset, since their running time is at least quadratic in the input size~\cite{chierichetti2017fair,rosner2018privacy,bercea2018cost,bera2019fair}.
Very recently, 	Backurs et al.~\cite{backurs2019scalable} propose a nearly linear approximation algorithm for fair $k$-median, but it only works for a binary type.
It is still unknown whether there exists a scalable approximation algorithm for multiple sensitive types~\cite{backurs2019scalable}.
To improve the running time of fair clustering algorithms, a powerful technique called coreset was introduced.
Roughly, a coreset for fair clustering is a	 small weighted point set, such that for any $k$-subset and any fairness constraint, the fair clustering objective computed over the coreset is approximately the same as that computed from the full dataset (Definition~\ref{def:fair_coreset}).
Thus, a coreset can be used as a proxy for the full dataset -- one can apply any fair clustering algorithm on the coreset, achieve a good approximate solution on the full dataset, and hope to speed up the algorithm.
As mentioned in~\cite{backurs2019scalable}, using coresets can indeed accelerate the computation time and save storage space for fair clustering problems.
Another benefit is that one may want to compare the clustering performance under different fairness constraints, and hence it may be  more efficient to repeatedly use coresets.
Currently, the only known result for coresets for fair clustering is by Schmidt et al.~\cite{schmidt2018fair}, who constructed an $\eps$-coreset for fair $k$-means clustering.
However, their coreset size includes a $\log n$ factor and only restricts to a sensitive type.
Moreover, there is no known coreset construction for other commonly-used clusterings, e.g., fair $k$-median.

\paragraph{Our contributions.}

	The main contribution of this paper is the efficient construction of coresets for clustering with fairness constraints that involve multiple, non-disjoint types.
	Technically, we show an efficient construction of $\eps$-coresets of size independent of $n$ for both fair $k$-median and fair $k$-means, summarized in Table~\ref{tab:coreset}.
	Let $\Gamma$ denote the number of distinct collections of groups that a point may belong to (see the first paragraph of Section~\ref{sec:kmedian} for the formal definition).
	\begin{compactitem}
		\item Our coreset for fair $k$-median is of size $O(\Gamma k^2 \eps^{-d})$ (Theorem~\ref{thm:kmedian}), which is the first known coreset to the best of our knowledge.
		\item For fair $k$-means, our coreset is of size $O(\Gamma k^3 \eps^{-d-1})$ (Theorem~\ref{thm:kmeans}), which improves the result of~\cite{schmidt2018fair} by an $\Theta(\frac{\log n}{\eps k^2})$ factor and generalizes it to multiple, non-disjoint types.
		\item As mentioned in~\cite{backurs2019scalable}, applying coresets can accelerate the running time of fair clustering algorithms, while suffering only an additional $(1+\eps)$ factor in the approxiation ratio.
		Setting $\eps=\Omega(1)$ and plugging our coresets into existing algorithms~\cite{schmidt2018fair,bera2019fair,backurs2019scalable}, we directly achieve scalable fair clustering algorithms, summarized in Table~\ref{tab:result}.
	\end{compactitem}
	We present novel technical ideas to deal with fairness constraints for coresets.
	\begin{compactitem}
		\item Our first technical contribution is a reduction to the case $\Gamma = 1$ (Theorem~\ref{thm:reduction}) which greatly simplifies the problem. Our reduction not only works for our specific construction, but also for all coreset constructions in general.
		\item Furthermore, to deal with the $\Gamma = 1$ case, we provide several interesting geometric observations for the optimal fair $k$-median/means clustering (Lemma~\ref{lm:kmedian_num_interval}), which may be of  independent interest.
	\end{compactitem}
	We implement our algorithm and conduct experiments on \textbf{Adult}, \textbf{Bank}, \textbf{Diabetes} and \textbf{Athlete} datasets.
	\begin{compactitem}
		\item A vanilla implementation results in a coreset with size that depends on $\eps^{-d}$.
		Our implementation is inspired by our theoretical results and produces coresets whose size is much smaller in practice.
		This improved implementation is still within the framework of our analysis, and the same worst case theoretical bound still holds.
		\item To validate the performance of our implementation, we experiment with varying $\eps$ for both fair $k$-median and $k$-means.
		As expected, the empirical error is well under the theoretical guarantee $\eps$, and the size does not suffer from the $\eps^{-d}$ factor.
		Specifically, for fair $k$-median, we achieve 5\% empirical error using only 3\% points of the original data sets, and we achieve similar error using 20\% points of the original data set for the $k$-means case.
		In addition, our coreset for fair $k$-means is better than uniform sampling and that of~\cite{schmidt2018fair} in the empirical error.
		\item The small size of the coreset translates to
		more than 200x speed-up (with error \textasciitilde 10\%) in the running time of computing the fair clustering objective when the fair constraint $F$ is given.
		{
		We also apply our coreset on the recent fair clustering algorithm~\cite{backurs2019scalable,bera2019fair}, and drastically improve the running time of the algorithm by approximately 5-15 times to~\cite{backurs2019scalable} and 15-30 times to~\cite{bera2019fair} for all above-mentioned datasets plus a large dataset \textbf{Census1990} that consists of 2.5 million records, even taking the coreset construction time into consideration.
		}
	\end{compactitem}

\begin{table}[t]
	\centering
	\captionsetup{font=small}
	\small
	\caption{Summary of coreset results.
		$T_1(n)$ and $T_2(n)$ denote the running time of an $O(1)$-approximate algorithm for $k$-median/means, respectively.
	}
	\label{tab:coreset}
	
	\begin{tabular}{ccc|cc}
		\toprule
		\multirow{2}{*}{} & \multicolumn{2}{c|}{$k$-Median} & \multicolumn{2}{c}{$k$-Means} \\
		& size  & construction time 	& size & construction time \\
		\midrule
		\cite{schmidt2018fair} & & &$O(\Gamma k\eps^{-d-2}\log n )$ & $\tilde{O}( k \eps^{-d-2}n \log n +T_2(n))$ \\
		This     & $O(\Gamma k^2\eps^{-d})$ & $O(k\eps^{-d+1} n +T_1(n))$ & $O(\Gamma k^3\eps^{-d-1})$ & $O( k\eps^{-d+1} n +T_2(n))$  \\
		\bottomrule
	\end{tabular}
\end{table}

\begin{table}[t]
	\centering
	\captionsetup{font=small}
	\small
	\caption{Summary of fair clustering algorithms.
		$\Delta$ denotes the maximum number of groups that a point may belong to,  and
		``multi'' means the algorithm can handle multiple non-disjoint types.
	}
	\label{tab:result}
	
	\begin{tabular}{cccc|ccc}
		\toprule
		\multirow{2}{*}{} & \multicolumn{3}{c|}{$k$-Median} & \multicolumn{3}{c}{$k$-Means} \\
		& multi  & approx. ratio &time 	& multi  & approx. ratio & time \\
		\midrule
		\cite{chierichetti2017fair} &  & O(1) & $\Omega(n^2)$ &  &  &   \\
		\cite{schmidt2018fair} &  &  &  &  & $O(1)$  & $n^{O(k)}$  \\
		\cite{backurs2019scalable}  &  & $\tilde{O}(d \log n)$ & $O(dn \log n+T_1(n))$ &  &  &   \\
		\cite{bercea2018cost}   &  & $(3.488,1)$ & $\Omega(n^2)$ &  & $(4.675,1)$ & $\Omega(n^2)$  \\
		\cite{bera2019fair}   & $\checkmark$ & $(O(1),4\Delta+4)$ & $\Omega(n^2)$ & $\checkmark$ & $(O(1),4\Delta+4)$ & $\Omega(n^2)$  \\
		 This    &  & $\tilde{O}(d \log n)$ & $O(dlk^2 \log (lk)+T_1(lk^2))$ &  & $O(1)$ & $(l k)^{O(k)}$  \\
		 This     & $\checkmark$ & $(O(1),4\Delta+4)$ & $\Omega(l^{2\Delta}k^4)$ & $\checkmark$ & $(O(1),4\Delta+4)$ & $\Omega(l^{2\Delta}k^6)$  \\
		\bottomrule
	\end{tabular}

\end{table}

\subsection{Other related works}
\label{subsec:related_work}

There are increasingly more works on fair
clustering algorithms.
Chierichetti et al.~\cite{chierichetti2017fair} introduced the fair clustering problem for a binary type and obtained approximation algorithms for fair $k$-median/center.
Backurs et al.~\cite{backurs2019scalable} improved the running time to nearly linear for fair $k$-median, but the approximation ratio is $\tilde{O}(d \log n)$.
R{\"o}sner and Schmidt~\cite{rosner2018privacy} designed a 14-approximate algorithm for fair $k$-center, and the ratio is improved to 5 by~\cite{bercea2018cost}.
For fair $k$-means, Schmidt et al.~\cite{schmidt2018fair} introduced the notion of fair coresets, and presented an efficient streaming algorithm.
More generally, Bercea et al.~\cite{bercea2018cost} proposed a bi-criteria approximation for fair $k$-median/means/center/supplier/facility location.
Very recently, Bera et al.~\cite{bera2019fair} presented a bi-criteria approximation algorithm for fair $(k,z)$-clustering problem (Definition~\ref{def:fair_clustering}) with arbitrary group structures (potentially overlapping), and Anagnostopoulos et al.~\cite{anagnostopoulos2019principal} improved their results by proposing the first constant-factor approximation algorithm.
It is still open to design a near linear time $O(1)$-approximate algorithm for the fair $(k,z)$-clustering problem.

There are other fair variants of clustering problems.
Ahmadian et al.~\cite{ahmadian2019clustering} studied a variant of the fair $k$-center problem in which the number of each type in each cluster has an upper bound, and proposed a bi-criteria approximation algorithm.
Chen et al.~\cite{chen2019proportionally} studied the fair clustering problem in which any $n/k$ points are entitled to form their own cluster if there is another center closer in distance for all of them.
{
Kleindessner et al.~\cite{kleindessner2019fair} investigate the fair $k$-center problem in which each center has a type, and the selection of the $k$-subset is restricted to include a fixed amount of centers belonging to each type.
In another paper~\cite{kleindessner2019guarantees}, they developed fair variants of spectral clusterings (a heuristic $k$-means clustering framework) by incorporating the proportional fairness constraints proposed by~\cite{chierichetti2017fair}.
}

The notion of coreset was first proposed by Agarwal et al.~\cite{agarwal2004approximating}.
There has been a large body of work for unconstrained clustering problems in Euclidean spaces~\cite{agarwal2002exact,har2004clustering,chen2006k,harpeled2007smaller,langberg2010universal,FL11,feldman2013turning,braverman2016new}).
Apart from these, for the general $(k,z)$-clustering problem, Feldman and Langberg~\cite{FL11}
presented an $\eps$-coreset of size $\tilde{O}(dk\eps^{-2z})$ in $\tilde{O}(nk)$ time.
Huang et al.~\cite{huang2018epsilon} showed an $\eps$-coreset of size $\tilde{O}(\mathrm{ddim}(X) \cdot k^3\eps^{-2z})$, where $\mathrm{ddim}(X)$ is doubling dimension that measures the intrinsic dimensionality of a space.
For the special case of $k$-means,
Braverman et al.~\cite{braverman2016new} improved the size to $\tilde{O}(k\eps^{-2}\cdot \min\left\{k/\eps, d \right\} )$ by a dimension reduction approach.
Works such as~\cite{FL11} use importance sampling technique which avoid the size factor $\eps^{-d}$, but it is unknown if such approaches can be used in fair clustering.

		\section{Problem definition}
	\label{sec:def}

	Consider a set $X\subseteq \R^d$ of $n$ data points, an integer $k$ (number of clusters), and $l$ groups $P_1,\ldots,P_l\subseteq X$.
	An \emph{assignment constraint}, which was proposed by Schmidt et al.~\cite{schmidt2018fair}, is a $k\times l$ integer matrix $F$.
	A clustering $\mathcal{C} = \{C_1,\ldots,C_k\}$, which is a $k$-partitioning of $X$, is said to satisfy assignment constraint $F$ if
	\[
	\left|C_i\cap P_j\right| = F_{ij},~\forall i\in [k], j\in [l].
	\]
	For a $k$-subset $C =\{ c_1, \ldots, c_k\} \subseteq X$ (the center set) and $z \in \mathbb{R}_{>0}$, we define $\kdist_z(X, F, C)$ as the minimum value of $\sum_{i\in [k]} \sum_{x \in C_i}{d^{z}(x, c_i)}$ among all clustering $\mathcal{C}= \{C_1,\ldots,C_k\}$ that satisfies $F$, which we call the optimal fair $(k,z)$-clustering value.
	If there is no clustering satisfying $F$, $\kdist_z(X, F, C)$ is set to be infinity.
	The following is our notion of coresets for fair $(k, z)$-clustering.
	This generalizes the notion introduced in~\cite{schmidt2018fair} which only considers a partitioned group structure.

	\begin{definition}[\bf Coreset for fair clustering]
		\label{def:fair_coreset}
		Given a set $X\subseteq \R^d$ of $n$ points and $l$ groups $P_1,\ldots,P_l\subseteq X$, a weighted point set $S\subseteq \R^d$ with weight function $w: S\rightarrow \R_{>0}$ is an $\eps$-coreset for the fair $(k,z)$-clustering problem, if for each $k$-subset $C\subseteq \R^d$ and each assignment constraint $F\in \mathbb{Z}_{\geq 0}^{k\times l}$, it holds that
		$
		\kdist_z(S, F, C) \in (1\pm \eps) \cdot \kdist_z(X, F, C).
		$
	\end{definition}
	Since points in $S$ might receive fractional weights, we change the definition of $\kdist_z$ a little, so that in evaluating $\kdist_z(S, F, C)$, a point $x \in S$ may be partially assigned to more than one cluster and the total amount of assignments of $x$ equals $w(x)$.

	The currently most general notion of fairness in clustering was proposed by~\cite{bera2019fair}, which enforces both upper bounds and lower bounds of any group's proportion in a cluster.
	\begin{definition}[\bf $(\alpha,\beta)$-proportionally-fair]
		\label{def:fair_cluster}
		A clustering $\mathcal{C}=(C_1,\ldots,C_k)$ is $(\alpha,\beta)$-proportionally-fair ($\alpha,\beta\in [0,1]^l$), if for each cluster $C_i$ and $j\in [l]$, it holds that
		$
		\alpha_j \leq \frac{\left| C_i\cap P_j \right|}{|C_i|} \leq \beta_j.
		$
	\end{definition}
	\noindent
	The above definition directly implies for each cluster $C_i$ and any two groups $P_{j_1}, P_{j_2}\in [l]$,
	$
	\frac{\alpha_{j_1}}{\beta_{j_2}} \leq \frac{\left| C_i\cap P_{j_1}  \right|}{\left| C_i\cap P_{j_2} \right|} \leq \frac{\beta_{j_1}}{\alpha_{j_2}}.
	$
	In other words, the fraction of points belonging to groups $P_{j_1}, P_{j_2}$ in each cluster is bounded from both sides.
	Indeed, similar fairness constraints have been investigated by works on other fundamental algorithmic problems such as data summarization \cite{celis2018fair}, ranking \cite{celis2018ranking,yang2017measuring}, elections~\cite{celis2018multiwinner}, personalization \cite{celis2018an,celis2019algorithmic}, classification~\cite{celis2019classification}, and online advertising~\cite{celis2019online}.
	Naturally, Bera et al.~\cite{bera2019fair} also defined the fair clustering problem with respect to $(\alpha,\beta)$-proportionally-fairness as follows.
	
	\begin{definition}[\bf $(\alpha,\beta)$-proportionally-fair $(k,z)$-clustering]
		\label{def:fair_clustering}
		Given a set $X\subseteq \R^d$ of $n$ points, $l$ groups $P_1,\ldots,P_l\subseteq X$, and two vectors $\alpha,\beta\in [0,1]^l$, the objective of $(\alpha,\beta)$-proportionally-fair $(k,z)$-clustering is to find a $k$-subset $C=\left\{c_1,\ldots,c_k\right\} \in \R^d$ and $(\alpha,\beta)$-proportionally-fair clustering $\mathcal{C}=\{C_1,\ldots,C_k\}$, such that the objective function
		$
		\sum_{i\in [k]} \sum_{x \in C_i}{d^{z}(x, c_i)}
		$
		is minimized.
	\end{definition}

	\noindent
	Our notion of coresets is very general, and we relate our notion of coresets to the $(\alpha,\beta)$-proportionally-fair clustering problem, via the following observation, which is similar to Proposition 5 in~\cite{schmidt2018fair}.

\begin{proposition}
	\label{pro:collection}
	Given a $k$-subset $C$, the assignment restriction required by $(\alpha,\beta)$-proportionally-fairness can be modeled as a collection of assignment constraints.
\end{proposition}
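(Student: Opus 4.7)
\begin{proofsketch}
The plan is to exhibit the desired collection by explicit enumeration. First, observe that for any clustering $\mathcal{C} = (C_1, \ldots, C_k)$ of $X$, the cluster sizes $n_i := |C_i|$ and the intersection counts $F_{ij} := |C_i \cap P_j|$ are nonnegative integers; moreover, since every point in $P_j$ is assigned to exactly one cluster, these integers satisfy the marginal identities $\sum_{i \in [k]} F_{ij} = |P_j|$ for every $j \in [l]$ and $\sum_{i \in [k]} n_i = n$. The $(\alpha,\beta)$-proportional fairness of $\mathcal{C}$ is then equivalent to the box inequalities $\alpha_j n_i \le F_{ij} \le \beta_j n_i$ for all $i \in [k]$, $j \in [l]$.

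Next I would define the collection $\mathcal{F}$ as the (finite) set of all $k \times l$ integer matrices $F$ for which there exists a nonnegative integer vector $(n_1,\ldots,n_k)$ with $\sum_i n_i = n$, $\sum_i F_{ij} = |P_j|$ for each $j$, and $\lceil \alpha_j n_i \rceil \le F_{ij} \le \lfloor \beta_j n_i \rfloor$ for all $i,j$. I would then verify the two-way equivalence: any $(\alpha,\beta)$-fair clustering induces integers $(n_i, F_{ij})$ satisfying the above constraints, and hence satisfies the corresponding assignment constraint $F \in \mathcal{F}$; conversely, any clustering satisfying some $F \in \mathcal{F}$ automatically satisfies the box inequalities and is therefore $(\alpha,\beta)$-fair. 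Consequently, given the $k$-subset $C$, minimizing the fair $(k,z)$-clustering cost over $(\alpha,\beta)$-fair clusterings equals the minimum of $\kdist_z(X,F,C)$ over $F \in \mathcal{F}$, which is exactly what it means to model the constraint as a collection of assignment constraints.

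The only genuine subtlety is that when the groups $P_1,\ldots,P_l$ overlap, the sum $\sum_j F_{ij}$ overcounts $|C_i|$, so the size vector $(n_1,\ldots,n_k)$ cannot be read off from $F$ alone. I would handle this by either (i) carrying the size vector alongside $F$ during enumeration (so $\mathcal{F}$ is indexed by pairs $(n,F)$, each contributing one assignment constraint interpreted in the obvious way), or equivalently (ii) augmenting the group list with a trivial group $P_0 := X$ (setting $\alpha_0 = \beta_0 = 1$), so that the added column $F_{i0} = n_i$ records the cluster sizes directly. Either fix keeps the collection finite and preserves the equivalence, completing the reduction of $(\alpha,\beta)$-proportional fairness to a disjunction of exact assignment constraints, mirroring Proposition~5 of~\cite{schmidt2018fair} in the partitioned-group setting.
\end{proofsketch}
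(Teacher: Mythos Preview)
The paper does not give a proof of this proposition; it merely states it as an observation analogous to Proposition~5 in~\cite{schmidt2018fair}. Your enumeration argument is the natural way to fill in the details, and you correctly identify the one genuine subtlety: when the groups overlap, the matrix $F$ does not determine the cluster sizes $|C_i|$, so the ratio $|C_i\cap P_j|/|C_i|$ is not a function of $F$ alone.

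Your fix~(ii) --- augmenting the group list with $P_0 := X$ (and $\alpha_0=\beta_0=1$) --- is correct and clean: the extra column records $|C_i|$, fairness becomes determined by the enlarged $F$, and since every point gains the same new group index, $\Gamma$ is unchanged and the downstream coreset statements are unaffected.

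Your fix~(i), however, does not work as written. If ``each pair $(n,F)$ contributes one assignment constraint'' and that constraint is the matrix $F$ (the only object the paper's definition recognizes), then the resulting collection $\mathcal{F}$ is just a set of $k\times l$ matrices, and satisfying some $F\in\mathcal{F}$ does not force the cluster sizes to match the $n$ you had in mind. Concretely, take $P_1=\{a,c,d\}$, $P_2=\{b,c,d\}$; then $C_1=\{a,b,c\},\,C_2=\{d\}$ and $C_1'=\{c,d\},\,C_2'=\{a,b\}$ give the \emph{same} $F$ but cluster sizes $(3,1)$ versus $(2,2)$, and for suitable $(\alpha,\beta)$ one clustering is fair and the other is not. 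So ``carrying the size vector alongside'' only works if you actually enforce it, which is exactly what fix~(ii) does. I would drop~(i) and keep~(ii).
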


\noindent
As a result, if a weighted set $S$ is an $\eps$-coreset satisfying Definition~\ref{def:fair_coreset}, then for any $\alpha,\beta\in [0,1]^l$, the $(\alpha,\beta)$-proportionally-fair $(k,z)$-clustering value computed from $S$ must be a $(1 \pm \eps)$-approximation of that computed from $X$.

\begin{remark}
	Definition~\ref{def:fair_cluster} enforces fairness by looking at the proportion of a group in each cluster.
	We can also consider another type of constraints over the number of group points in each cluster, defined as follows.
	\begin{definition}[$(\alpha,\beta)$-fair]
		\label{def:fair_num}
		We call a clustering $\mathcal{C} = \{C_1,\ldots,C_k\}$ $(\alpha,\beta)$-fair ($\alpha,\beta\in \mathbb{Z}_{\geq 0}^{k\times l}$), if for each cluster $C_i$ and each $j\in [l]$, we have
		$
		\alpha_{ij} \leq \left| C_i\cap P_j \right| \leq \beta_{ij}.
		$
	\end{definition}
For instance, the above definition can be applied if one only cares about the diversity and requires that each cluster should contain at least one element from each group, i.e., $|C_i\cap P_j|\geq 1$ for all $i,j$.
	We can similarly define the $(\alpha,\beta)$-fair $(k,z)$-clustering problem with respect to the above definition as in Definition~\ref{def:fair_clustering}, and
	Proposition~\ref{pro:collection} still holds in this case.
	Hence, an $\eps$-coreset for fair $(k,z)$-clustering also preserves the clustering objective of the $(\alpha,\beta)$-fair $(k,z)$-clustering problem.
\end{remark}

	\section{Technical overview}
	\label{sec:overview}
	
	We introduce novel techniques to tackle the assignment constraints.
	Recall that $\Gamma$ denotes the number of distinct collections of groups that a point may belong to.
	Our first technical contribution is a general reduction to the $\Gamma=1$ case which works for any coreset construction algorithm (Theorem~\ref{thm:reduction}).
	The idea is to divide $X$ into $\Gamma$ parts with respect to the groups that a point belongs to, and construct a fair coreset with parameter $\Gamma=1$ for each group. The observation is that the union of these coresets is a coreset for the original instance and $\Gamma$.

	Our coreset construction for the case $\Gamma=1$ is based on the framework of~\cite{harpeled2007smaller} in which unconstrained $k$-median/means coresets are provided.
We first introduce the framework of~\cite{harpeled2007smaller} briefly and then show the main technical difficulty of our work.
The main observation of~\cite{harpeled2007smaller} is that it suffices to deal with $X$ that lies on a line.
Specifically, they show that it suffices to construct at most $O(k\eps^{-d+1})$ lines, project $X$ to their closest lines and construct an $\eps/3$-coreset for each line.
The coreset for each line is then constructed by partitioning the line into $\poly(k/\eps)$ contiguous sub-intervals, and designate at most two points to represent each sub-interval and include these points in the coreset.
	In their analysis, a crucially used property is that the clustering for any given centers partitions $X$ into $k$ contiguous parts on the line, since each point must be assigned to its nearest center.
However, this property might not hold in fair clustering, which is the main difficulty.
	Nonetheless, we manage to show a new structural lemma, that the optimal fair $k$-median/means clustering partitions $X$ into $O(k)$ contiguous intervals.
	For fair $k$-median, the key geometric observation is that there always exists a center whose corresponding optimal fair $k$-median cluster forms a contiguous interval (Claim~\ref{claim:single_interval}), and this combined with an induction implies the optimal fair clustering partitions $X$ into $2k-1$ intervals.
	For fair $k$-means, we show that each optimal fair cluster actually forms a single contiguous interval.
	Thanks to the new structural properties, plugging in a slightly different set of parameters in~\cite{harpeled2007smaller} yields fair coresets.

	\section{Coresets for fair $k$-median clustering}
\label{sec:kmedian}

In this section, we construct coresets for fair $k$-median ($z=1$).
For each $x\in X$, denote $\calP_x=\left\{i\in [l]: x\in P_i \right\}$ as the collection of groups that $x$ belongs to.
Let $\Gamma$ denote the number of distinct $\calP_x$'s.
Let $T_z(n)$ denote the running time of a constant approximation algorithm for the $(k,z)$-clustering problem.
The main theorem is as follows.

\begin{theorem}[\bf Coreset for fair $k$-median]
	\label{thm:kmedian}
	There exists an algorithm that constructs an $\eps$-coreset for the fair $k$-median problem of size $O(\Gamma k^2 \eps^{-d})$, in $O( k\eps^{-d+1} n +T_1(n))$ time.
\end{theorem}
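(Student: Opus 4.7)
The plan is to combine two ingredients: the general reduction to the single-collection case in Theorem~\ref{thm:reduction}, and a careful adaptation of the line-based coreset construction of Har-Peled and Mazumdar to the fair setting. First I would invoke the reduction to split $X$ into $\Gamma$ sub-instances, one per distinct value of $\calP_x$, each of which satisfies $\Gamma = 1$. It then suffices to build an $\eps$-coreset of size $O(k^2 \eps^{-d})$ for each sub-instance and take the union; the weights add across sub-instances, and since every assignment constraint $F$ restricted to a sub-instance is itself an assignment constraint (on the smaller group structure), the union is a valid $\eps$-coreset for the original instance.

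For the $\Gamma = 1$ case I would first compute a constant-factor approximate unconstrained $k$-median solution $C^\star$ in time $T_1(n)$, and use $C^\star$ to build a family $\mathcal{L}$ of $O(k\eps^{-d+1})$ lines passing through centers in $C^\star$, chosen so that snapping each $x\in X$ to its nearest line in $\mathcal{L}$ incurs total additive error $O(\eps)\cdot\OPT$; this is exactly the line-covering construction of Har-Peled--Mazumdar, which only uses information about distances to $C^\star$ and is therefore oblivious to the fairness structure. Assigning each point to its nearest line reduces the task to $O(k\eps^{-d+1})$ essentially one-dimensional fair $k$-median sub-problems. On each such line I would partition it into $O(k/\eps)$ contiguous sub-intervals whose lengths grow geometrically with the distance to $C^\star$, and include (at most two) weighted representatives per sub-interval. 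Counting across the $O(k\eps^{-d+1})$ lines and $O(k/\eps)$ sub-intervals per line yields $O(k^2 \eps^{-d})$ representatives in the $\Gamma=1$ sub-instance.

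The main obstacle is the step where a point on a line is snapped to its sub-interval's representative. In the unconstrained HPM analysis this error is controlled because, for any center set, the induced optimal clustering partitions the line into $k$ contiguous intervals, so the sub-interval diameter directly bounds the snapping error. Under fairness constraints this monotonicity fails: a point may be assigned to a far center purely to satisfy $F$. I would bypass this by invoking the structural lemma (Lemma~\ref{lm:kmedian_num_interval}) announced in the overview, which asserts that the optimal fair $k$-median clustering of any point set on a line is a partition into $O(k)$ contiguous intervals. Its proof relies on Claim~\ref{claim:single_interval}, isolating a single center whose optimal fair cluster forms exactly one interval; removing that center and its cluster leaves a fair $(k{-}1)$-median instance on the residual points, to which induction applies and yields the $2k-1$ bound.

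With the $O(k)$-interval property in hand, the HPM swap/snapping argument goes through with only a constant-factor blow-up in the number of sub-intervals, and the per-point snapping error can be charged against $\kdist_1(X,F,C)$ uniformly in every center set $C$ and every assignment constraint $F$; this gives the required $(1\pm\eps)$ guarantee of Definition~\ref{def:fair_coreset}. For the running time, assigning $X$ to $\mathcal{L}$ and building the 1-D coresets costs $O(k\eps^{-d+1} n)$, while the constant-factor approximation used to seed the construction runs in time $T_1(n)$. Summing over the $\Gamma$ sub-instances and adding the reduction cost of simply partitioning $X$ by $\calP_x$ (which is linear in $n$) produces the claimed $O(\Gamma k^2 \eps^{-d})$ coreset size and $O(k\eps^{-d+1} n + T_1(n))$ construction time.
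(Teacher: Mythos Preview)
Your proposal is correct and follows essentially the same route as the paper: reduce to $\Gamma=1$ via Theorem~\ref{thm:reduction}, run the Har-Peled--Mazumdar line-projection scheme seeded by a constant-factor unconstrained solution, and handle the one-dimensional fair analysis through the $2k{-}1$-interval structural lemma (Lemma~\ref{lm:kmedian_num_interval}) proved by the single-interval claim plus induction. The only cosmetic discrepancy is your description of the per-line construction as ``geometrically growing sub-intervals with at most two representatives'': the paper instead uses greedy error-based batching with threshold $\xi=\Theta(\eps\,\OPT/k)$ and a \emph{single} mean representative per batch (the two-point trick is reserved for $k$-means); this does not affect the size bound or the role of Lemma~\ref{lm:kmedian_num_interval} in the analysis.
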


\noindent
Note that $\Gamma$ is usually small.
For instance, if there is only a sensitive attribute~\cite{schmidt2018fair}, then each $\calP_x$ is a singleton and $\Gamma = l$.
More generally, let $\Lambda$ denote the maximum number of groups that any point belongs to, then $\Gamma\leq l^\Lambda$, but there is only $O(1)$ sensitive attributes for each point.

The main technical difficulty for the coreset construction is to deal with the assignment constraints.
We make an important observation (Theorem~\ref{thm:reduction}), that one only needs to prove Theorem~\ref{thm:kmedian} for the case $l=1$, and we thus focus on the case $l = 1$.
This theorem is a generalization of Theorem 7 in~\cite{schmidt2018fair}, and the coreset of~\cite{schmidt2018fair} actually extends to arbitrary group structure thanks to our theorem.
\begin{theorem}[\bf Reduction from $l$ groups to a single group]
	\label{thm:reduction}
	Suppose there exists an algorithm that computes an $\eps$-coreset of size $t$ for the fair $(k,z)$-clustering problem of $\widehat{X}$ satisfying that $l=1$, in time $T(|\widehat{X}|,\eps,k,z)$.
	There exists an algorithm, that given a set $X$ that can be partitioned into $\Gamma$ distinct subsets $X^{(1)},\ldots,X^{(\Gamma)}$ in which all points $x\in X^{(i)}$ correspond to the same collection $\calP_x$ for each $i\in [\Gamma]$, computes an $\eps$-coreset for the fair $(k,z)$-clustering problem of size $\Gamma t$, in time $\sum_{i\in [\Gamma]}T(|X^{(i)}|,\eps,k,z)$.
\end{theorem}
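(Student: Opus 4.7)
The plan is to build $S$ by running the hypothesized single-group algorithm on each class $X^{(i)}$ in isolation --- treating $X^{(i)}$ as an $l=1$ instance whose single group is all of $X^{(i)}$, so that the assignment constraint collapses to a per-cluster size vector --- and letting $S$ be the disjoint union $\bigcup_{i\in[\Gamma]} S^{(i)}$, with every point of $S^{(i)}$ inheriting the common group profile $\calP_i$ shared by $X^{(i)}$. This immediately yields the size bound $|S|\le \Gamma t$ and the stated running time $\sum_{i\in[\Gamma]} T(|X^{(i)}|,\eps,k,z)$, so only the coreset guarantee remains to be checked.

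The heart of the argument is a decomposition of $\kdist_z$ along the partition $\{X^{(i)}\}$. Fix any center set $C=\{c_1,\dots,c_k\}$ and any $F\in\mathbb{Z}_{\geq 0}^{k\times l}$. Since $P_{j'}=\bigcup_{i:\,j'\in\calP_i} X^{(i)}$, any clustering $\{C_1,\dots,C_k\}$ of $X$ satisfies $F$ if and only if the local counts $g^{(i)}_j:=|C_j\cap X^{(i)}|$ satisfy $\sum_{i:\,j'\in\calP_i} g^{(i)}_j = F_{jj'}$ for all $j,j'$. Once a compatible matrix $g$ is fixed, the assignment problems on different $X^{(i)}$'s decouple (they share no points and no binding constraints), giving
\begin{equation*}
\kdist_z(X,F,C)\;=\;\min_{g\text{ compatible with }F}\;\sum_{i\in[\Gamma]}\kdist_z\bigl(X^{(i)},g^{(i)},C\bigr),
\end{equation*}
where $g^{(i)}$ is read as the single-group cluster-size constraint on $X^{(i)}$. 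The identical decomposition holds with $S$ in place of $X$ and $S^{(i)}$ in place of $X^{(i)}$, using the fractional-assignment convention spelled out after Definition~\ref{def:fair_coreset}. Invoking the single-group coreset guarantee term by term yields $\kdist_z(S^{(i)},g^{(i)},C)\in (1\pm\eps)\,\kdist_z(X^{(i)},g^{(i)},C)$ for every admissible $g^{(i)}$; summing over $i$ and then minimizing over $g$ preserves the $(1\pm\eps)$ bracket, and combining the two decomposition identities gives $\kdist_z(S,F,C)\in(1\pm\eps)\,\kdist_z(X,F,C)$, which is exactly the definition of an $\eps$-coreset.

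The main obstacle I anticipate is reconciling the integer cluster-size constraints available on the $X$-side with potentially fractional per-cluster sizes on the $S$-side induced by the fractional weights in $S^{(i)}$: strictly speaking, Definition~\ref{def:fair_coreset} guarantees the $(1\pm\eps)$ relation only for integer constraints. This is mild, however, because within a single class every point carries the same group profile, so the only binding constraint on $S^{(i)}$ is a single cluster-size vector summing to the total weight of $S^{(i)}$, and the extension from integer to real-valued sizes follows by a standard rescaling/weight-splitting argument that sits naturally inside the paper's fractional-assignment convention. Aside from this point, the rest of the verification is bookkeeping --- disjointness of the $X^{(i)}$'s, closure of group labels under the union, and linearity of the objective in the decomposition --- that follows directly from the construction.
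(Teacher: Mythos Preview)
Your proposal is correct and follows essentially the same approach as the paper: partition $X$ into the $\Gamma$ classes, build a single-group coreset $S^{(i)}$ on each, take the union, and verify the coreset guarantee by decomposing $\kdist_z$ along the partition. The paper phrases the decomposition by fixing the optimal clustering of $X$ and extracting a specific per-class constraint $F^{(i)}$ (your minimizing $g$), then argues the reverse direction ``similarly''; your explicit minimization over compatible $g$ is the same argument, and your flagging of the integer-versus-fractional constraint issue on the $S$-side is in fact more careful than the paper, which glosses over it.
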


\begin{proof}
	Consider the case that $\Gamma=1$ in which all $\calP_x$'s are the same.
	Hence, this case can b reduced degenerated to $l=1$ and has an $\eps$-coreset of size $t$ by assumption.
	For each $i\in [\Gamma]$, suppose $S^{(i)}$ is an $\eps$-coreset for the fair $(k,z)$-clustering problem of $X^{(i)}$ where each point in $S^{(i)}$ belongs to all groups in $\calP_i$.
	Let $S:=\bigcup_{i\in [l]} S^{(i)}$.
	It is sufficient to prove $S$ is an $\eps$-coreset for the fair $(k,z)$-clustering problem of $X$, for both the correctness and the running time.

	Given a $k$-subset $C\subseteq \R^d$ and an assignment constraint $F$, let $C^\star_1,\ldots,C^\star_k$ be the optimal fair clustering of the instance $(X,F,C)$.
	Then for each collection $X^{(i)}$ ($i\in [\Gamma]$), we construct an assignment constraint $F^{(i)}\in \calZ^{k\times l}$ as follows: for each $j_1\in [k]$ and $j_2\in [l]$, let $F^{(i)}_{j_1,j_2}=0$ if $j_2\notin \calP_i$ and $\left|C^\star_{j_1}\cap X^{(i)}\right|$ if $j_2\in \calP_i$, i.e., $F^{(i)}_{j_1,j_2}$ is the number of points within $X^{(i)}$ that belong to $C_{j_1}\cap P_{j_2}$.
	By definition, we have that for each $j_1\in [k]$ and $j_2\in [l]$,
	\begin{eqnarray}
		\label{eq:reduction}
		F_{j_1,j_2} = \sum_{i\in [\Gamma]} F^{(i)}_{j_1,j_2}.
	\end{eqnarray}
	Then
	\begin{align*}
		\kdist_z(X, F, C) = & \sum_{i\in [l]} \kdist_z(X^{(i)},F^{(i)},C) & (\text{Defns. of $\kdist_z$ and $F^{(i)}$})\\
		\geq & (1- \eps) \cdot \sum_{i\in [l]} \kdist_z(S^{(i)}, F^{(i)}, C) & (\text{Defn. of $S^{(i)}$}) \\
		\geq & (1- \eps) \cdot \kdist_z(S, F, C) & (\text{Optimality and Eq.~\eqref{eq:reduction}}).
	\end{align*}
	Similarly, we can prove that $\kdist_z(S, F, C)\geq (1-\eps)\kdist_z(X, F, C)$.
	It completes the proof.
\end{proof}

Our coreset construction for both fair $k$-median and $k$-means are similar to that in~\cite{harpeled2007smaller}, except we use a different set of parameters. At a high level, the algorithm reduces general instances to instances where data lie on a line, and it only remains to give a coreset for the line case.

\begin{remark}
\label{remark:kcenter}
Theorem~\ref{thm:reduction} can be applied to construct an $\eps$-coreset of size $O(\Gamma k \eps^{-d+1})$ for the fair $k$-center clustering problem, since Har-Peled's coreset result~\cite{har2004clustering} directly provides an $\eps$-coreset of size $O(k \eps^{-d+1})$ for the case of $l=1$.
\end{remark}

\subsection{The line case}
\label{subsec:kmedian_1D}
Since $l=1$, we describe $F$ as an integer vector in $\mathbb{Z}_{\geq 0}^k$.
For a weighted point set $S$ with weight $w: S\rightarrow \R_{\geq 0}$, we define the \emph{mean} of $S$ by $\overline{S}:=\frac{1}{|S|}\sum_{p\in S} w(p)\cdot p$ and the \emph{error} of $S$ by $\Delta(S):=\sum_{p\in S} w(p)\cdot d(p,\overline{S})$.
Denote $\OPT$ as the optimal value of the unconstrained $k$-median clustering.
Our construction is similar to~\cite{harpeled2007smaller}, summarized in Algorithm~\ref{alg:kmedian_1D}.
An illustration of Algorithm~\ref{alg:kmedian_1D} may be found in Figure~\ref{fig:kmedian_1D}.

\begin{algorithm}[ht]
	\caption{FairMedian-1D($X,k$)}
	\label{alg:kmedian_1D}
	\KwIn{$X=\left\{x_1,\ldots,x_n\right\}\subset \R^d$ lying on the real line where $x_1\leq \ldots \leq x_n$, an integer $k\in [n]$, a number $\OPT$ as the optimal value of $k$-median clustering.}
	\KwOut{an $\eps$-coreset $S$ of $X$ together with weights $w: S\rightarrow \R_{\geq 0}$.}
	Set a threshold $\xi$ satisfying that $\xi = \frac{\eps \cdot \OPT}{30k}$ \;
	Consider the points from $x_1$ to $x_n$ and group them into batches in a greedy way: each batch $B$ is a maximal point set satisfying that $\Delta(B)\leq \xi$\;
	Denote $\calB(X)$ as the collection of all batches. Let $S\leftarrow \bigcup_{B\in \calB(X)} \overline{B}$\;
	For each point $x=\overline{B}\in S$, $w(x)\leftarrow |B|$\;
	Return $(S,w)$\;
\end{algorithm}

\begin{figure}
	\begin{center}
	\includegraphics[width=\textwidth]{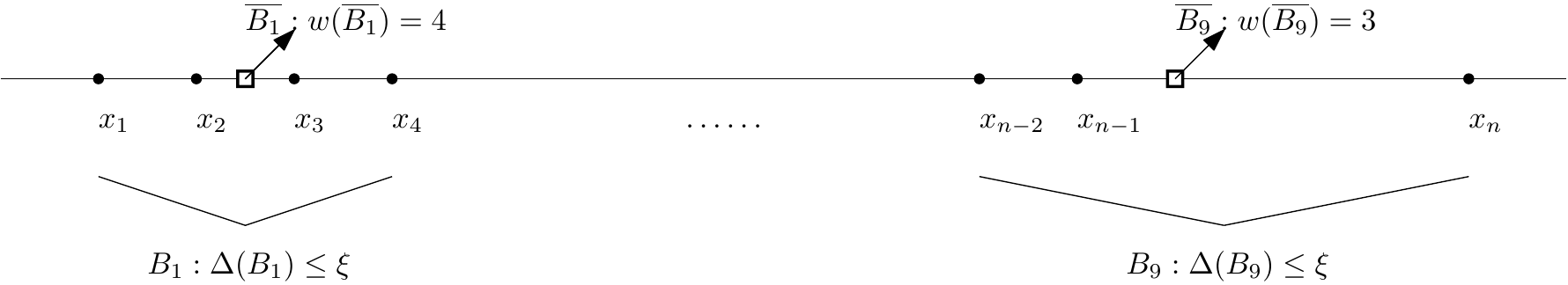}
	\end{center}
\caption{an illustration of Algorithm~\ref{alg:kmedian_1D} that divides $X$ into 9 batches.}
	\label{fig:kmedian_1D}
\end{figure}

\paragraph{Analysis.} We then prove the following theorem that shows the correctness of our coreset for the line case.

\begin{theorem}[\bf Coreset for fair $k$-median when $X$ lies on a line]
	\label{thm:kmedian_1D}
	Algorithm~\ref{alg:kmedian_1D} computes an $\eps/3$-coreset $S$ for fair $k$-median clustering of $X$, in time $O(|X|)$.
\end{theorem}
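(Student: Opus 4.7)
The plan is to invoke the structural result stated as Lemma~\ref{lm:kmedian_num_interval} (that an optimal fair $k$-median clustering of points on a line partitions them into at most $2k-1$ contiguous intervals, each assigned to some center) and then bound the cost change batch-by-batch, exploiting a geometric cancellation specific to the line.

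Fix a $k$-subset $C$ and a feasible assignment vector $F\in\mathbb{Z}_{\ge 0}^k$. For the upper bound $\kdist_1(S,F,C) \le (1+\eps/3)\kdist_1(X,F,C)$, I would take an optimal clustering of $(X,F,C)$ whose support, by Lemma~\ref{lm:kmedian_num_interval}, forms contiguous intervals $I_1,\dots,I_m$ with $m\le 2k-1$, each assigned to some center $c_{\pi(j)}$. I then construct a fractional clustering of $S$ by sending, for every batch $B\in\calB(X)$, the $|B|$ units of weight at $\overline{B}$ to exactly the centers to which the points of $B$ were sent in the $X$-clustering; this preserves $F$ automatically. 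The triangle inequality gives the per-batch bound
\[
\Bigl|\sum_{p\in B}d(p,c)\;-\;|B|\cdot d(\overline{B},c)\Bigr|\;\le\;\sum_{p\in B}d(p,\overline{B})\;=\;\Delta(B)\;\le\;\xi.
\]
The crucial observation is that this bound is \emph{zero} whenever the center $c$ lies outside the convex hull of $B$ on the line, because all signed differences $p-c$ have the same sign and telescope to $|B|(\overline{B}-c)$. Consequently, a batch contributes nonzero error only if either (i) it sits entirely inside some $I_j$ and the assigned center $c_{\pi(j)}$ lies between its extreme points, or (ii) it straddles an interval boundary. Case (i) can occur for at most $k$ batches, since each of the $k$ centers is interior to at most one batch (batches are pairwise disjoint intervals); case (ii) is bounded by the number of interval boundaries, at most $m-1\le 2k-2$. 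Summing yields total error at most $(3k-2)\xi\le 3k\cdot\tfrac{\eps\OPT}{30k}=\eps\OPT/10$. Since any feasible $(F,C)$ satisfies $\kdist_1(X,F,C)\ge \OPT$, this additive slack becomes a $(1+\eps/10)\subseteq(1+\eps/3)$ multiplicative factor.

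For the matching lower bound $\kdist_1(X,F,C) \le (1+\eps/3)\kdist_1(S,F,C)$, I would start from an optimal fractional fair clustering of $(S,F,C)$, apply Lemma~\ref{lm:kmedian_num_interval} on the weighted line instance to get the same $O(k)$-interval structure, and lift it to an integral clustering of $X$: because $F$ is integral and the $|B|$ points of each batch are contiguous on the line, the fractional split of $\overline{B}$'s weight across centers can be realized by an integral assignment of the points of $B$, and the same triangle-inequality-plus-cancellation bound applies to each batch. The running-time claim is immediate: a single greedy pass over the sorted $X$ produces all batches while incrementally updating $\overline{B}$ and $\Delta(B)$ in constant time per point.

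The principal obstacle is the zero-error cancellation in case (i); without it, the error would scale with the total number of batches (which can be $\Theta(n)$) rather than with $k$, and the choice $\xi=\eps\OPT/(30k)$ would be insufficient. The other delicate piece is ensuring the reverse-direction lifting from a fractional $S$-clustering to an integral $X$-clustering does not incur error beyond the batchwise $\Delta(B)\le\xi$ bound, which relies on the contiguity of each batch on the line.
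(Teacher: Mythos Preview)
Your overall plan---invoke Lemma~\ref{lm:kmedian_num_interval} and bound the cost change batch by batch---matches the paper, but your central ``zero-error'' cancellation does not hold in the setting of the theorem. You argue that for a batch $B$ assigned entirely to a center $c$, the quantity $\sum_{p\in B}d(p,c)-|B|\,d(\overline{B},c)$ vanishes whenever $c$ lies outside the convex hull of $B$, because the signed differences $p-c$ share a sign and telescope. This treats $d(p,c)=|p-c|$, which is only correct when $c$ itself is on the line. The theorem, however, must hold for all $C\subseteq\R^d$: the data $X$ sits on a line in $\R^d$ while the centers need not. For $c$ off the line with projection $c'$ and height $h>0$, one has $d(p,c)=\sqrt{(p-c')^2+h^2}$, which is strictly convex in $p$; by Jensen the batch error above is strictly positive, not zero. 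In the forward direction this is harmless (the sign is such that convexity already gives $\le 0$), but in the reverse direction---lifting an optimal $S$-clustering back to $X$---convexity points the wrong way and your cancellation is essential yet false. As you yourself flag, without it the error degrades to a per-batch sum of $\xi$'s, which is far too large.

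The paper fills exactly this gap by invoking (the proof of) Lemma~2.8 of Har-Peled and Kushal~\cite{harpeled2007smaller}: for a contiguous run of batches all assigned to one center whose \emph{projection} lies outside the run, the total signed error is at most $2\xi$ in one direction and at most $0$ in the other---an $O(\xi)$ bound per \emph{interval}, not per batch, obtained from the $1$-Lipschitzness and monotonicity of $d(\cdot,c)$ along the run rather than from linearity. The paper then counts at most $3k-2$ ``bad'' batches (those straddling a boundary of the $(2k-1)$-interval partition or containing a projected center), each costing $\le\xi$ by the triangle inequality, and at most $3k-1$ remaining good runs each costing $\le 2\xi$, for a total of at most $(9k-4)\xi<\eps\OPT/3$. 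Your argument needs this interval-level Har-Peled bound in place of the batch-level zero; once that is in hand, the rest of your outline goes through.
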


The running time is not hard since for each batch $B\in \calB(X)$, it only costs $O(|B|)$ time to compute $\overline{B}$.
Hence, Algorithm~\ref{alg:kmedian_1D} runs in $O(|X|)$ time.
In the following, we focus on correctness.
In~\cite{harpeled2007smaller}, it was shown that $S$ is an $\eps/3$-coreset for the unconstrained $k$-median clustering problem.
In their analysis, it is crucially used that the optimal clustering partitions $X$ into $k$ contiguous intervals.
Unfortunately, the nice ``contiguous'' property does not hold in our case because of the assignment constraint $F\in \R^k$.
To resolve this issue, we prove a new structural property (Lemma~\ref{lm:kmedian_num_interval}) that the optimal fair $k$-median clustering actually partitions $X$ into only $O(k)$ contiguous intervals.

\begin{lemma}[\bf Fair $k$-median clustering consists of $2k-1$ contiguous intervals]
	\label{lm:kmedian_num_interval}
	Suppose $X := \{x_1, \ldots, x_n\} \subset \R^d$ lies on the real line where $x_1 \leq \ldots \leq x_n$.
	For any $k$-subset $C = (c_1, \ldots, c_k)\in \R^d$ and any assignment constraints $F\in \Z_{\geq 0}^k$, there exists an optimal fair $k$-median clustering that partitions $X$ into at most $2k-1$ contiguous intervals.
\end{lemma}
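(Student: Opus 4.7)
The plan is to decouple the lemma into a structural claim and a short induction. The structural claim I would aim for is: \emph{in some optimal fair $k$-median clustering, the cluster of the leftmost center (under the sorted order of center positions on the line) is a prefix of $X$ (under the sorted order $x_1 \le \dots \le x_n$).} Given this claim, the induction peels off the prefix together with its center and applies the hypothesis on the remaining $k-1$ centers restricted to a suffix of $X$, so the number of intervals grows by at most one per level.

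To establish the structural claim, I would relabel so that $c_1 \le c_2 \le \dots \le c_k$, let $C_1$ denote the cluster of $c_1$, and pick, among the finitely many optimal fair clusterings, one that minimizes the potential $\Phi := \sum_{x \in C_1} \mathrm{rank}(x)$, where $\mathrm{rank}(x)$ is the index of $x$ in the sorted order. Suppose towards a contradiction that $C_1$ is not a prefix. Then there exist indices $a<b$ with $x_a \notin C_1$ and $x_b \in C_1$; let $c_j$ be the center to which $x_a$ is assigned, so $c_1 \le c_j$. Swapping the assignments of $x_a$ and $x_b$ preserves the sizes of $C_1$ and $C_j$ (hence fairness), and by the $L_1$ rearrangement inequality
\[
|x_a - c_1| + |x_b - c_j| \;\le\; |x_a - c_j| + |x_b - c_1|
\]
does not increase the total cost, so the swapped clustering is still optimal but has strictly smaller $\Phi$ (since $\mathrm{rank}(x_b) - \mathrm{rank}(x_a) > 0$), contradicting the minimality of $\Phi$.

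For the induction on $k$, the base case $k=1$ is immediate. For $k\ge 2$, apply the structural claim to obtain an optimal clustering in which $C_1$ is a prefix of $X$, so $X\setminus C_1$ is a contiguous suffix of $X$. The inherited assignment on $X\setminus C_1$ using $\{c_2,\dots,c_k\}$ and constraints $(F_2,\dots,F_k)$ is optimal for that sub-instance (otherwise any improvement, combined with the unchanged assignment of $C_1$ to $c_1$, would beat the original). Applying the inductive hypothesis yields an optimal fair $(k-1)$-median clustering of $X\setminus C_1$ using at most $2(k-1)-1 = 2k-3$ contiguous intervals of $X\setminus C_1$; since $X\setminus C_1$ is itself a contiguous range in $X$, these remain contiguous intervals in $X$. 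Substituting this sub-clustering (of the same cost) and adding $C_1$ gives an optimal fair clustering of $X$ using at most $1 + (2k-3) = 2k-2 \le 2k-1$ contiguous intervals, as required. The main delicate point will be handling ties in the exchange argument (e.g., when $c_1 = c_j$ the rearrangement inequality is an equality, so individual swaps are only cost-non-increasing); introducing the potential $\Phi$ is precisely what buys strict monotonicity and termination, after which the induction is pure bookkeeping.
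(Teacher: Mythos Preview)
There is a genuine gap: your argument implicitly assumes the centers lie on the line, but the lemma allows $c_1,\dots,c_k$ to be arbitrary points of $\R^d$ (only $X$ lies on the line). When you write ``relabel so that $c_1\le\dots\le c_k$'' and invoke the rearrangement inequality $|x_a-c_1|+|x_b-c_j|\le|x_a-c_j|+|x_b-c_1|$, you are treating the $c_i$ as real numbers. For off-line centers the relevant distance is $d(x,c_i)=\sqrt{(x-c'_i)^2+h_i^2}$ (with $c'_i$ the projection onto the line and $h_i$ the height above it), and the analogous swap inequality can fail. Concretely, take $x_a=10$, $x_b=20$ on the $x$-axis, $c_1=(0,10)$, $c_2=(5,1)$, and $F=(1,1)$; then $c'_1<c'_2$, yet the unique optimal assignment sends $x_a\mapsto c_2$ and $x_b\mapsto c_1$ (cost $\sqrt{26}+\sqrt{500}\approx 27.46$ versus $\sqrt{200}+\sqrt{226}\approx 29.18$), so $C_1=\{x_b\}$ is not a prefix. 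Your potential $\Phi$ does not rescue the argument here because the swap strictly increases cost.

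The paper handles exactly this subtlety. It first shows that for $p,q\in\R^d$ the map $x\mapsto d(x,p)-d(x,q)$ along the line is not monotone in general but merely unimodal (either increase-then-decrease or decrease-then-increase). From this it derives a weaker structural claim than yours: in any optimal clustering, \emph{some} cluster (not necessarily that of a ``leftmost'' center) consists of a single contiguous interval. The induction then removes that interval, which may sit in the middle of $X$; reinserting it can split one surviving interval into two, which is why the final count is $2k-1$ rather than your $2k-2$. Your argument is correct and gives the sharper bound $2k-2$ in the special case where all centers lie on the line, but it does not cover the general $\R^d$ setting the lemma requires.
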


\begin{proof}
	We prove by induction on $k$.
	The induction hypothesis is that, for any $k \geq 1$, Lemma~\ref{lm:kmedian_num_interval} holds for any data set $X$, any $k$-subset $C\subseteq \R^d$ and any assignment constraint $F\in \Z_{\geq 0}^k$.
	The base case $k=1$ holds trivially since all points in $X$ must be assigned to $c_1$.

	Assume the lemma holds for $k-1$ ($k\geq 2$) and we will prove the inductive step $k$.
	Let $C^\star_1, \ldots, C^\star_k$ be the optimal fair $k$-median clustering w.r.t. $C$ and $F$, where $C^\star_i \subseteq X$ is the subset assigned to center $c_i$.
	We present the structural property in Claim~\ref{claim:single_interval}, whose proof is given later.

	\begin{claim}
		\label{claim:single_interval}
		There exists $i \in [k]$ such that $C^\star_i$ consists of exactly one contiguous interval.
	\end{claim}
	We continue the proof of the inductive step by constructing a reduced instance $(X', F', C')$ where a) $C' := C \setminus \{c_{i_0}\}$; b) $X' = X \setminus C^\star_{i_0}$; c) $F'$ is formed by removing the $i_0$-th coordinate of $F$.
	Applying the hypothesis on $(X', F', C')$, we know the optimal fair $(k-1)$-median clustering consists of at most $2k-3$ contiguous intervals.
	Combining with $C^\star_{i_0}$ which has exactly one contiguous interval would increase the number of intervals by at most $2$.
	Thus, we conclude that the optimal fair $k$-median clustering for $(X, F, C)$ has at most $2k-1$ contiguous intervals.
	This finishes the inductive step.

	Finally, we complete the proof of Claim~\ref{claim:single_interval}.
	We first prove the following fact for preparation.
	\begin{fact}
			\label{fact:IDDI}
			Suppose $p, q \in \R^d$.
			Define $f : \R \to \R$ as $f(x) := d(x, p) - d(x, q)$ (here we abuse the notation by treating $x$ as a point in the $x$-axis of $\R^d$).
			Then $f$ is either ID or DI.\footnote{ID means that the function $f$ first (non-strictly) increases and then (non-strictly) decreases.
				DI means the other way round.}
	\end{fact}
		
	\begin{proof}
			Let $h_p$ and $h_q$ be the distance from $p$ and $q$ to the x-axis respectively, and let $u_p$ and $u_q$ be the corresponding $x$-coordinate of $p$ and $q$.
			We have
			\begin{align*}
			f(x) = \sqrt{(x - u_p)^2 + h_p^2} - \sqrt{(x - u_q)^2 + h_q^2}.
			\end{align*}
			Then we can regard $p,q$ as two points in $\R^2$ by letting $p=(u_p,h_p)$ and $q=(u_q,h_q)$.
			Also we have
			\begin{align*}
			f'(x)
			= \frac{x - u_p}{\sqrt{(x - u_p)^2 + h_p^2}} - \frac{x - u_q}{\sqrt{(x-u_q)^2 + h_q^2}}
			= \frac{x - u_p}{d(x, p)} - \frac{x - u_q}{d(x, q)}.
			\end{align*}
			W.l.o.g. assume that $u_p \leq u_q$.
			Next, we rewrite $f'(x)$ with respect to $\cos(\angle pxu_p)$ and $\cos(\angle qxu_q)$.

			\begin{enumerate}
				\item If $x \leq u_p$. Then $f'(x) = \frac{d(x, u_q)}{d(x, q)} - \frac{d(x, u_p)}{d(x, p)} = \cos(\angle qxu_q) - \cos(\angle pxu_p)$.
				\item If $u_p < x \leq u_q$. Then $f'(x) = \frac{d(x, u_p)}{d(x, p)} + \frac{d(x, u_q)}{d(x, q)} = \cos(\angle pxu_p) + \cos(\angle qxu_q)$.
				\item If $x > u_q$. Then $f'(x) = \frac{d(x, u_p)}{d(x, p)} - \frac{d(x, u_q)}{d(x, q)} = \cos(\angle pxu_p) - \cos(\angle qxu_q)$.
			\end{enumerate}
			
			\noindent
			Denote the intersecting point of line $pq$ and the $x$-axis to be $y$.
			Specificially, if $h_p=h_q$, we denote $y=-\infty$.
			Note that $f'(x)=0$ if and only if $x=y$.
			Now we analyze $f'(x)$ in two cases (whether or not $h_p \leq h_q$).
			\begin{itemize}
				\item Case i): $h_p \leq h_q$ which implies that $y<u_p$.
				When $x$ goes from $-\infty$ to $u_p$, first $f'(x) \leq 0$ and then $f'(x) \geq 0$.
				When $x > u_p$, $f'(x) \geq 0$.
				\item Case ii): $h_p > h_q$ which implies that $y>u_q$.
				When $x \leq u_q$, $f'(x) \geq 0$. When $x$ goes from $u_q$ to $+\infty$, first $f'(x) \geq 0$ and then $f'(x) \leq 0$.
			\end{itemize}
			Therefore, $f(x)$ is either DI or ID.
	\end{proof}
	
	\noindent	
	\textbf{Proof of Claim~\ref{claim:single_interval}.} Suppose for the contrary that for any $i\in [k]$, $C^\star_i$ consists of at least two contiguous intervals.
	Pick any $i$ and suppose $S_L, S_R \subseteq C^\star_i$ are two contiguous intervals such that $S_L$ lies on the left of $S_R$.
	Let $y_L$ denote the rightmost point of $S_L$ and $y_R$ denote the leftmost point of $S_R$.
	Since $S_L$ and $S_R$ are two distinct contiguous intervals, there exists some point $y \in X$ between $y_L$ and $y_R$ such that $y \in C^\star_j$ for some $j \neq i$.
	Define $g : \R \to \R$ as $g(x) := d(x, c_j) - d(x, c_i)$.
	By Fact~\ref{fact:IDDI}, we know that $g(x)$ is either ID or DI.

	If $g$ is ID, we swap the assignment of $y$ and $y_{\text{min}} := \arg\min_{x \in \{y_L, y_R\}}{g(x)}$ in the optimal fair $k$-median clustering.
	Since $g$ is ID, for any interval $P$ with endpoints $p$ and $q$, $\min_{x \in P}{g(x)} = \min_{x \in \{p, q\}}{g(x)}$.
	This fact together with $y_L \leq y \leq y_R$ implies that $g(y_{\text{min}}) - g(y)\leq 0$.
	Hence, the change of the objective is
	\[
		d(y, c_i) - d(y, c_j) - d(y_{\text{min}}, c_i) + d(y_{\text{min}}, c_j)
		= g(y_{\text{min}}) - g(y)
		\leq 0.
	\]
	This contradicts with the optimality of $C^\star$ and hence $g$ has to be DI.
		
	Next, we show that there is no $y' \in C^\star_j$ such that $y' < y_L$ or $y' > y_R$.
	We prove by contradiction and only focus on the case of $y' < y_L$,
	since the case of $z > y_R$ can be proved similarly by symmetry.
	We swap the assignment of $y_L$ and $y_{\text{max}} := \arg\max_{x \in \{ y, y' \}}{g(x)}$ in the optimal fair $k$-median clustering.
	The change of the objective is
	\begin{align*}
		&d(y_L, c_j) - d(y_L, c_i) - d(y_{\text{max}}, c_j) + d(y_{\text{max}}, c_i) \\
		=& g(y_L) - g(y_{\text{max}})
		\leq 0,
	\end{align*}
	where the last inequality is by the fact that $g$ is DI.
	This contradicts the optimality of $C^\star$.
	Hence, we conclude such $y'$ does not exist.

	Therefore, $\forall x \in C^\star_j$, $y_L < x < y_R$.
	By assumption, $C^\star_j$ consists of at least two contiguous intervals within $(y_L,y_R)$.
	However, we can actually do exactly the same argument for $C^\star_j$ as in the $i$ case, and eventually we would find a $j'$ such that $C^\star_{j'}$ lies inside a strict smaller interval $(y_L', y_R')$ of $X$, where $y_L < y_L' < y_R' < y_R$.
	Since $n$ is finite, we cannot do this procedure infinitely, which is a contradiction.
	This finishes the proof of Claim~\ref{claim:single_interval}.
\end{proof}

\subsection{Proof of Theorem~\ref{thm:kmedian_1D}.}
\label{subsec:proof_kmedian_1D}

Now we are ready to prove the main theorem of the last subsection.

\begin{proof}
	The proof idea is similar to that of Lemma 2.8 in~\cite{harpeled2007smaller}.
	We first rotate the space such that the line is on the $x$-axis and assume that $x_1\leq x_2\leq \ldots \leq x_n$.
	Given an assignment constraint $F\in \R^k$ and a $k$-subset $C=\left\{c_1,\ldots, c_k\right\}\subseteq \R^d$, let $c'_i$ denote the projection of point $c_i$ to the real line and assume that $c'_1\leq c'_2\leq \ldots\leq c'_k$.
	Our goal is to prove that
	\[
	\left|\kdist_1(S,F,C)-\kdist_1(X,F,C)\right| \leq \frac{\eps}{3}\cdot \kdist_1(X,F,C).
	\]

	\noindent
	By the construction of $S$, we build up a mapping $\pi: X\rightarrow S$ by letting $\pi(x) = \overline{B}$ for any $x\in B$.
	For each $i\in [k]$, let $C_i$ denote the collection of points assigned to $c_i$ in the optimal fair $k$-median clustering of $X$.
	By Lemma~\ref{lm:kmedian_num_interval}, $C_1,\ldots,C_k$ partition the line into at most $2k-1$ intervals $\calI_1, \ldots, \calI_t$ ($t\leq 2k-1$), such that all points of any interval $\calI_i$ are assigned to the same center.
	Denote an assignment function $f: X\rightarrow C$ by $f(x)=c_i$ if $x\in C_i$.
	Let $\widehat{\calB}$ denote the set of all batches $B$, which intersects with more than one intervals $\calI_i$, or alternatively, the interval $\calI(B)$ contains the projection of a center point of $C$ to the $x$-axis.
	Clearly, $|\widehat{\calB}|\leq 2k-2+k= 3k-2$.
	For each batch $B\in \widehat{\calB}$, we have
	\begin{align}
	\label{eq:kmedian1}
	\sum_{x\in B} d(\pi(x),f(x))-d(x,f(x)) \stackrel{\text{triangle ineq.}}{\leq} \sum_{x\in B} |d(x,\pi(x))| = \sum_{x\in B} |d(x,\overline{B})| \stackrel{\text{Defn. of $B$}}{\leq} \frac{\eps \OPT}{30k}.
	\end{align}
	Note that $X\setminus \bigcup_{B\in \widehat{\calB}} B$ can be partitioned into at most $3k-1$ contiguous intervals.
	Denote these intervals by $\calI'_1,\ldots,\calI'_{t'}$ ($t'\leq 3k-1$).
	By definition, all points of each interval $\calI'_i$ are assigned to the same center whose projection is outside $\calI'_i$.
	Then by the proof of Lemma 2.8 in~\cite{harpeled2007smaller}, we have that for each $\calI'_i$,
	\begin{align}
	\label{eq:kmedian2}
	\sum_{x\in \calI'_i} d(\pi(x),f(x))-d(x,f(x)) \leq 2\xi = \frac{\eps \OPT}{15k}.
	\end{align}
	Combining Inequalities~\eqref{eq:kmedian1} and~\eqref{eq:kmedian2}, we have
	\begin{eqnarray}
	\label{eq:kmedian3}
	\begin{split}
	& \kdist_1(S,F,C)-\kdist_1(X,F,C) \leq \sum_{x\in X} d(\pi(x),f(x))-d(x,f(x)) & (\text{Defn. of $\kdist_1(S,F,C)$}) \\
	= & \sum_{B\in \widehat{B}}  \sum_{x\in B} d(\pi(x),f(x))-d(x,f(x)) &\\
	&+ \sum_{i\in [t]} \sum_{x\in \calI'_i} d(\pi(x),f(x))-d(x,f(x)) &\\
	\leq & (3k-2)\cdot \frac{\eps \OPT}{30k} + (3k-1)\cdot \frac{\eps \OPT}{15k} & (\text{Ineqs.~~\eqref{eq:kmedian1} and~\eqref{eq:kmedian2}}) &\\
	\leq &\frac{\eps \OPT}{3} \leq \frac{\eps }{3}\cdot \kdist_1(X,F,C).&
	\end{split}
	\end{eqnarray}
	To prove the other direction, we can regard $S$ as a collection of $n$ unweighted points and consider the optimal fair $k$-median clustering of $S$.
	Again, the optimal fair $k$-median clustering of $S$ partitions the $x$-axis into at most $2k-1$ contiguous intervals, and can be described by an assignment function $f':S\rightarrow C$.
	Then we can build up a mapping $\pi':S\rightarrow X$ as the inverse function of $\pi$.
	For each batch $B$, let $S_B$ denote the collection of $|B|$ unweighted points located at $\overline{B}$.
	We have the following inequality that is similar to Inequality~\eqref{eq:kmedian1}
	\[
	\sum_{x\in S_B} d(\pi'(x),f'(x))-d(x,f'(x)) \leq \frac{\eps \OPT}{30k}.
	\]
	Suppose a contiguous interval $\calI$ consists of several batches and satisfies that all points of $\calI\cap S$ are assigned to the same center by $f'$ whose projection is outside $\calI$.
	Then by the proof of Lemma 2.8 in~\cite{harpeled2007smaller}, we have that
	\[
	\sum_{B\in \calI} 	\sum_{x\in S_B} d(\pi'(x),f'(x))-d(x,f'(x)) \leq 0.
	\]
	Then by a similar argument as for Inequality~\eqref{eq:kmedian3}, we can prove the other direction
	\[
	\kdist_1(X,F,C)-\kdist_1(S,F,C)  \leq \frac{\eps }{3}\cdot \kdist_1(X,F,C),
	\]
	which completes the proof.
\end{proof}

\subsection{Extending to higher dimension}
\label{subsec:kmedian_highdim}

The extension is the same as that of~\cite{harpeled2007smaller}.
For completeness, we describe the detailed procedure for coresets for fair $k$-median.
\begin{enumerate}
	\item We start with computing an approximate $k$-subset $C^\star=\left\{c_1,\ldots,c_k \right\}\subseteq \R^d$ such that $\OPT\leq \kdist_1(X,C^\star)\leq c\cdot \OPT$ for some constant $c>1$.\footnote{For example, we can set $c=10$ by~\cite{kanungo2004local}.}
	\item Then we partition the point set $X$ into sets $X_1,\ldots,X_k$ satisfying that $X_i$ is the collection of points closest to $c_i$.
	\item For each center $c_i$, we take a unit sphere centered at $c_i$ and construct an $\frac{\eps}{3c}$-net $N_{c_i}$\footnote{An $\eps$-net $Q$ means that for any point $p$ in the unit sphere, there exists a point $q\in Q$ satisfying that $d(p,q)\leq \eps$.} on this sphere.
	By Lemma 2.6 in~\cite{harpeled2007smaller}, $|N_{c_i}| = O(\eps^{-d+1})$ and may be computed in $O(\eps^{-d+1})$ time.
	Then for every $p\in N_{c_i}$, we emit a ray from $c_i$ to $p$.
	Overall, there are at most $O(k \eps^{-d+1})$ lines.
	\item For each $i\in [k]$, we project all points of $X_i$ onto the closest line around $c_i$.
	Let $\pi: X\rightarrow \R^d$ denote the projection function.
	By the definition of $\frac{\eps}{3c}$-net, we have that $\sum_{x\in X} d(x,\pi(x))\leq \eps \cdot \OPT/3$ which indicates that the projection cost is negligible.
	Then for each line, we compute an $\eps/3$-coreset of size $O(k \eps^{-1})$ for fair $k$-median by Theorem~\ref{thm:kmedian_1D}.
	Let $S$ denote the combination of coresets generated from all lines.
\end{enumerate}
\begin{proof}[Proof of Theorem~\ref{thm:kmedian}]
	Since there are at most $O(k \eps^{-d+1})$ lines and the coreset on each line is of size at most $O(k \eps^{-1})$ by Theorem~\ref{thm:kmedian_1D}, the total size of $S$ is $O(k^2 \eps^{-d})$.
	For the correctness, by the optimality of $\OPT$ (which is \emph{unconstrained} optimal),
	for any given assignment constraint $F\in \R^k$ and any $k$-subset $C\subseteq \R^d$, $\OPT\leq \kdist_1(X,F,C)$.
	Combining this fact with Theorem~\ref{thm:kmedian_1D}, we have that $S$ is an $\eps$-coreset for fair $k$-median clustering, by the same argument as in Theorem 2.9 of~\cite{harpeled2007smaller}.
	For the running time, we need $T_1(n)$ time to compute $C^\star$ and $\APX$ and	the remaining construction time is upper bounded by $O(k\eps^{-d+1} n)$ -- the projection process to lines.
	This completes the proof.
\end{proof}
\begin{remark}
	\label{remark:simplify_net}
	In fact, it suffices to emit a set of rays such that the total cost of projecting points to the rays is at most $\frac{\eps \cdot \OPT}{3}$.
	This observation is crucially used in our implementations (Section~\ref{sec:empirical}) to reduce the size of the coreset, particularly to avoid the construction of the $O(\eps)$-net which is of $O(\eps^{-d})$ size.
\end{remark}

	\section{Coresets for fair $k$-means clustering}
	\label{sec:kmeans}

	In this section, we show how to construct coresets for fair $k$-means.
	Similar to the fair $k$-median case, we apply the approach in~\cite{harpeled2007smaller}.
	The main theorem is as follows.

	\begin{theorem}[\bf Coreset for fair $k$-means]
		\label{thm:kmeans}
		There exists an algorithm that constructs $\eps$-coreset for the fair $k$-means problem of size $O(\Gamma k^3 \eps^{-d-1})$, in $O( k^2\eps^{-d+1} n +T_2(n,d,k))$ time.
	\end{theorem}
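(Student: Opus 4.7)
\begin{proofsketch}
The plan is to follow the two-stage architecture of Section~\ref{sec:kmedian} used for fair $k$-median, with distances replaced by squared distances and with the structural lemma strengthened accordingly. First, by Theorem~\ref{thm:reduction} it suffices to build an $\eps$-coreset for the case $\Gamma=1$ (equivalently $l=1$) and concatenate the $\Gamma$ resulting coresets, which accounts for the factor $\Gamma$ in the size and an additive $T_2(n)$ in the construction time.

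For the $l=1$ case I would reuse the Har-Peled--Mazumdar line reduction from Section~\ref{subsec:kmedian_highdim}: compute a constant-factor approximate $k$-subset $C^\star$ in time $T_2(n)$, split $X$ into its Voronoi cells $X_1,\dots,X_k$, lay down $O(\eps^{-d+1})$ rays emanating from each $c_i^\star$ via an $O(\eps)$-net on the unit sphere, and project each $x \in X_i$ onto its closest ray. Since $\OPT \leq \kdist_2(X,F,C)$ for every feasible $(F,C)$ and the net bounds the total squared projection error by $O(\eps \cdot \OPT)$, this step is absorbed into the $\eps$ budget. This leaves $O(k \eps^{-d+1})$ lines, on each of which I still need to build a one-dimensional $(\eps/3)$-coreset for fair $k$-means.

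On each line I would run the greedy batching of Algorithm~\ref{alg:kmedian_1D} with the squared-distance error $\Delta(B) := \sum_{p\in B} w(p)\, d(p,\overline{B})^2$ and a threshold $\xi$ of order $\eps^2 \cdot \OPT / k^2$. The key new ingredient I need is an analogue of Lemma~\ref{lm:kmedian_num_interval}, namely that for fair $k$-means on a line there exists an optimal fair clustering in which every cluster is a \emph{single} contiguous interval (so the clustering spans at most $k$ intervals, versus $2k-1$ for $k$-median). I expect this to be the main technical obstacle. The argument should be a swap, but exploiting that for squared Euclidean distance the function $g(x) := d(x,c_i)^2 - d(x,c_j)^2$ is \emph{linear} in $x$ (rather than the ID/DI shape of Fact~\ref{fact:IDDI}): if some optimal cluster $C^\star_i$ contained two points $y_L < y_R$ with an intermediate $y \in C^\star_j$, linearity forces $g(y)$ to be a convex combination of $g(y_L)$ and $g(y_R)$, so at least one of the swaps ``$y \leftrightarrow y_L$'' or ``$y \leftrightarrow y_R$'' weakly decreases the objective while preserving the assignment constraint $F$ (automatic because $l=1$); iterating on a suitable potential such as the pair (number of non-contiguous cluster pairs, objective value) ordered lexicographically should terminate at a fully contiguous optimum.

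Equipped with this structural lemma, the remainder would follow the template of Theorem~\ref{thm:kmedian_1D} adapted to squared distances: the at-most-$k$ boundary batches each contribute $O(\xi)$ error, the interior batches contribute telescoping error bounded via the squared-distance analogue of Lemma~2.8 in~\cite{harpeled2007smaller}, and the choice of $\xi$ balances the two contributions so that the total stays within $\eps \cdot \OPT \leq \eps \cdot \kdist_2(X,F,C)$. With $\xi$ of the stated order the line produces $O(\OPT/\xi) = O(k^2/\eps^2)$ batches, so multiplying by the $O(k \eps^{-d+1})$ lines and by $\Gamma$ yields coreset size $O(\Gamma k^3 \eps^{-d-1})$; the construction time is dominated by the projection step and the computation of $C^\star$, matching the claim.
\end{proofsketch}
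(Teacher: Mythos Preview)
Your high-level architecture is correct and matches the paper: reduce to $\Gamma=1$ via Theorem~\ref{thm:reduction}, reduce to lines via the Har-Peled--Mazumdar ray construction, and on each line exploit a structural lemma that every optimal fair $k$-means cluster is a single contiguous interval. Your proposed proof of the structural lemma via the linearity of $g(t)=d(t,c_i)^2-d(t,c_j)^2$ along the line is correct and in fact cleaner than the paper's argument (Lemma~\ref{lm:kmeans_contiguous}), which derives the same conclusion through a separate scalar inequality $a^2+b^2\le c^2+d^2$ combined with Pythagoras.

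There is, however, a real gap in the line-case analysis. You propose to ``run the greedy batching of Algorithm~\ref{alg:kmedian_1D}'' with squared error, represent each batch by its weighted mean, and bound interior batches by ``the squared-distance analogue of Lemma~2.8 in~\cite{harpeled2007smaller}''. This does not work for squared distances: the identity $\sum_{x\in B}d^2(x,c)=\Delta(B)+|B|\cdot d^2(\overline{B},c)$ is exact, so an interior batch represented only by its mean incurs error exactly $\Delta(B)$, and summing over the $O(k^2/\eps^2)$ batches can be $\Theta(\OPT)$ --- there is no telescoping to exploit here. The paper's Algorithm~\ref{alg:kmeans_1D} instead represents each batch by \emph{two} weighted points (Lemma~\ref{lm:batch_property}, i.e.\ Lemmas~3.2 and~3.4 of~\cite{harpeled2007smaller}) chosen so that both $\overline{B}$ and $\Delta(B)$ are preserved; this makes every interior batch contribute \emph{zero} error. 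With that fix, only the at most $k-1$ boundary batches matter, and their contribution is not $O(\xi)$ each but is bounded via Cauchy--Schwarz by roughly $\eta+2\sqrt{\eta}\cdot\sqrt{\text{local cost}}$ with $\eta\le 2\Delta(B)\le 2\xi$ (see Inequality~\eqref{eq:thm1}); this cross term is precisely why $\xi$ must be of order $\eps^2\OPT/k^2$ rather than $\eps\,\OPT/k$. Once you adopt the two-point representation and this boundary analysis, your size and time calculations go through as stated.
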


	\noindent
	Note that the above result improves the coreset size of~\cite{schmidt2018fair} by a $O(\frac{\log n}{\eps k^2})$ factor.
	Similar to the fair $k$-median case, it suffices to prove for the case $l=1$.
	Recall that an assignment constraint for $l=1$ can be described by a vector $F\in \R^k$.
	Denote $\OPT$ to be the optimal $k$-means value without any assignment constraint.

	\subsection{The line case}
	\label{subsec:kmeans_1D}
	
	Similar to~\cite{harpeled2007smaller}, we first consider the case that $X$ is a point set on the real line.
	Recall that for a weighted point set $S$ with weight $w: S\rightarrow \R_{\geq 0}$, the \emph{mean} of $S$ by $\overline{S}:=\frac{1}{|S|}\sum_{p\in S} w(p)\cdot p$, and the \emph{error} of $S$ by $\Delta(S):=\sum_{p\in S} w(p)\cdot d^2(p,\overline{S})$.
	Again, our construction is similar to~\cite{harpeled2007smaller}, summarized in Algorithm~\ref{alg:kmeans_1D}.
	The main difference to Algorithm~\ref{alg:kmedian_1D} is in Line 3: for each batch, we need to construct two weighted points for the coreset using a constructive lemma of \cite{harpeled2007smaller}, summarized in Lemma~\ref{lm:batch_property}.
    Also note that the selected threshold $\xi$ is different from that in Algorithm~\ref{alg:kmedian_1D}.

	\begin{algorithm}[ht]
		\caption{FairMeans-1D($X,k$)}
		\label{alg:kmeans_1D}
		\KwIn{$X=\left\{x_1,\ldots,x_n\right\}\subset \R^d$ lying on the real line where $x_1\leq \ldots \leq x_n$, an integer $k\in [n]$, a number $\OPT$ as the optimal value of $k$-means clustering.}
		\KwOut{an $\eps$-coreset $S$ of $X$ together with weights $w: S\rightarrow \R_{\geq 0}$.}
		Set a threshold $\xi$ satisfying that $\xi=  \frac{\eps^2 \OPT}{200k^2}$ \;
		Consider the points from $x_1$ to $x_n$ and group them into batches in a greedy way: each batch $B$ is a maximal point set satisfying that $\Delta(B)\leq \xi$\;
		Denote $\calB(X)$ as the collection of all batches. For each batch $B$, construct a collection $\calJ(B)$ of two points $q_1,q_2$ together with weights $w_1,w_2$ satisfying Lemma~\ref{lm:batch_property} \;
		Let $S\leftarrow \bigcup_{B\in \calB(X)} \calJ(B)$\;
		Return $(S,w)$\;
	\end{algorithm}

	\begin{lemma}[\bf Lemmas 3.2 and 3.4 in \cite{harpeled2007smaller}]
		\label{lm:batch_property}
		The number of batches is $O(k^2/\eps^2)$.
		For each batch $B$, there exist two weighted points $q_1,q_2\in \calI(B)$ together with weight $w_1,w_2$ satisfying that
		\begin{itemize}
			\item $w_1+w_2=|B|$.
			\item Let $\calJ(B)$ denote the collection of two weighted points $q_1$ and $q_2$.
			Then we have $\overline{\calJ(B)}=\overline{B}$ and $\Delta(B)=\Delta(\calJ(B))$.
			\item
			Given any point $q\in \R^d$, we have
			\[
			\kdist_2(B,q)=\Delta(B)+|B|\cdot d^2(q,\overline{B})=\kdist_2(\calJ(B),q).
			\]
		\end{itemize}
	\end{lemma}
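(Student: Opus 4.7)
The lemma has two distinct deliverables: a count bound on the number of batches produced by the greedy partition, and an explicit construction of the two-point representation $\calJ(B)$ inside $\calI(B)$ realizing the three listed properties. I will address each in turn, following the template of Har-Peled and Kushal.

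For the batch count, I would first invoke the (standard) fact that an unconstrained optimal $k$-means clustering in 1D partitions $X$ into $k$ contiguous intervals, via a simple exchange argument. At most $k-1$ greedy batches straddle two such intervals; every remaining ``interior'' batch lies inside a single optimal cluster. For two consecutive interior batches $B_1, B_2$ in the same optimal cluster $C^*_j$ with center $c^*_j$, the maximality of $B_1$ forces $\Delta(B_1 \cup \{y\}) > \xi$ for the first point $y$ of $B_2$; monotonicity of $\Delta$ under adding points then gives $\Delta(B_1 \cup B_2) > \xi$. Pairing off interior batches as $(B_1,B_2),(B_3,B_4),\ldots$ within each cluster and charging each such disjoint pair against $\sum_{x \in B_1 \cup B_2} d^2(x, c^*_j) \ge \Delta(B_1 \cup B_2) > \xi$, summing across clusters bounds the number of disjoint pairs by $\OPT/\xi$. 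Hence the interior batch count is at most $2\OPT/\xi + O(k) = O(k^2/\eps^2)$ given $\xi = \eps^2\OPT/(200k^2)$, with the $O(k)$ boundary batches absorbed into the asymptotic bound.

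For the two-point construction, I would parameterize $q_1 = \overline{B} - t$ and $q_2 = \overline{B} + s$ with $t, s > 0$. The first-moment constraint $w_1 q_1 + w_2 q_2 = |B|\overline{B}$ together with $w_1 + w_2 = |B|$ forces $w_1 t = w_2 s$, uniquely determining $w_1, w_2$ in terms of $t, s$. Substituting into the second-moment constraint $w_1(q_1-\overline{B})^2 + w_2(q_2-\overline{B})^2 = \Delta(B)$ reduces the whole system to the single scalar equation $ts = \Delta(B)/|B|$, whose right-hand side is just the variance of $B$. By the Bhatia--Davis inequality, $\mathrm{Var}(B) \le (\overline{B}-x_{\min})(x_{\max}-\overline{B})$, so I can always pick $t \le \overline{B}-x_{\min}$ and $s \le x_{\max}-\overline{B}$, placing both $q_i$ inside $\calI(B)$.

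The third property, preservation of $\kdist_2$ at every query $q \in \R^d$, is then immediate from the bias--variance decomposition $\sum_{y \in Y} w(y) d^2(y, q) = |Y|\cdot d^2(\overline{Y}, q) + \Delta(Y)$ for any weighted set $Y$. Since $\calJ(B)$ is built to match $|B|$, $\overline{B}$, and $\Delta(B)$, applying the identity to both $B$ and $\calJ(B)$ yields the stated chain $\kdist_2(B, q) = \Delta(B) + |B|\cdot d^2(q, \overline{B}) = \kdist_2(\calJ(B), q)$. The main obstacle in the plan is the batch-count argument, specifically the disjoint-pair bookkeeping that converts ``consecutive pairs have combined error $> \xi$'' into a clean bound on the total count; it is also the only step where the precise numerical choice of $\xi$ plays a role, while the other two parts are purely algebraic once the right parameterization is fixed.
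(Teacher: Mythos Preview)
The paper does not prove this lemma at all; it simply cites it as Lemmas~3.2 and~3.4 of Har-Peled and Kushal, so there is no in-paper argument to compare against. Your proposal is correct and is essentially the Har-Peled--Kushal proof: the batch-count bound via charging disjoint pairs of consecutive interior batches against $\OPT$, the explicit two-point construction via matching the first two moments (with Bhatia--Davis guaranteeing the points land in $\calI(B)$), and the bias--variance identity for the third bullet.
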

	
	\paragraph{Analysis.} We argue that $S$ is indeed an $\eps/3$-coreset for the fair $k$-means clustering problem.
	By Theorem 3.5 in~\cite{harpeled2007smaller}, $S$ is an $\eps/3$-coreset for $k$-means clustering of $X$.
	However, we need to handle additional assignment constraints.
	To address this, we introduce the following lemma showing that every optimal cluster satisfying the given assignment constraint is within a contiguous interval.

	\begin{lemma}[\bf Clusters are contiguous for fair $k$-means]
		\label{lm:kmeans_contiguous}
		Suppose $X=\left\{x_1,\ldots,x_n\right\}$ where $x_1\leq x_2\leq \ldots \leq x_n$.
		Given an assignment constraint $F\in \R^k$ and a $k$-subset $C=\left\{c_1,\ldots, c_k\right\}\subseteq \R^d$.
		Then letting $C_i:=\left\{x_{1+\sum_{j<i} F_j},\ldots, x_{\sum_{j\leq i} F_j} \right\}$ ($i\in [k]$), we have
		\[
		\kdist_2(X,F,C) = \sum_{i\in [k]} \sum_{x\in C_i} d^2(x,c_i).
		\]
	\end{lemma}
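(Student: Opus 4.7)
The plan is to prove this by a swap/exchange argument that exploits the special structure of squared Euclidean distance. The key reduction is to write the cost of assigning a point $x$ on the line to center $c_i$ as $d^2(x,c_i)=(x-u_i)^2+h_i^2$, where $u_i$ is the projection of $c_i$ onto the line and $h_i$ is the distance from $c_i$ to the line. The additive constant $h_i^2$ depends only on the center, not on the point, so the problem is effectively one-dimensional with a per-center offset.

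After this reduction I would, without loss of generality, relabel the centers so that $u_1\leq u_2\leq\cdots\leq u_k$ (permuting the labels does not change $\kdist_2(X,F,C)$ once $F$ is relabeled correspondingly; we are free to do this because the lemma just asserts the existence of a consecutive-block optimum, which we then identify with the given indexing). Fix any optimal fair clustering $\mathcal{C}^\star$. Suppose there exist indices $a<b$ with $x_a$ assigned to $c_j$ and $x_b$ assigned to $c_i$ for some $i<j$. Swap their assignments; the change in objective is
\[
\bigl[(x_b-u_i)^2-(x_a-u_i)^2\bigr]+\bigl[(x_a-u_j)^2-(x_b-u_j)^2\bigr]=2(x_b-x_a)(u_i-u_j)\leq 0,
\]
since $x_b\geq x_a$ and $u_i\leq u_j$. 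Crucially, swapping the assignments of two points preserves the cluster sizes, so the swapped clustering still satisfies $F$.

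Iterating this swap until no ``inversion'' remains yields a fair clustering whose cost is at most that of $\mathcal{C}^\star$ and in which, for every pair $a<b$, the center index assigned to $x_a$ is at most that assigned to $x_b$. The only such assignment satisfying the size constraints $|C_i|=F_i$ is the consecutive-block one: $x_1,\dots,x_{F_1}$ go to $c_1$, $x_{F_1+1},\dots,x_{F_1+F_2}$ go to $c_2$, and so on, giving exactly $C_i=\{x_{1+\sum_{j<i}F_j},\dots,x_{\sum_{j\leq i}F_j}\}$. By optimality of $\mathcal{C}^\star$ this assignment achieves $\kdist_2(X,F,C)$, which is the desired equality.

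The main obstacle, which I would want to treat carefully, is the implicit relabeling step: the statement of the lemma associates the $i$-th block with the given center $c_i$, so one must argue that relabeling the centers by projection is without loss of generality (the optimal value is unchanged, and $F$ can be permuted accordingly). Beyond that, only mild care is needed for ties (points with equal coordinates, or centers with equal projections), which the above swap inequality handles automatically since it is non-strict. Compared to the $k$-median lemma, where Claim~\ref{claim:single_interval} required a subtle induction, the $k$-means case is simpler precisely because the squared-distance swap identity $2(x_b-x_a)(u_i-u_j)$ is monotone and decouples the per-point and per-center contributions.
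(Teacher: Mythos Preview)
Your approach is essentially the same as the paper's: both project the centers onto the line, order them by their projections, and use a swap argument to show that any inversion (a point with smaller coordinate assigned to a center with larger projection) can be removed without increasing the cost while preserving the size constraint $F$. The paper carries out the swap inequality more indirectly: it first establishes $d(c'_{i_1},x_{j_1})+d(c'_{i_2},x_{j_2})\le d(c'_{i_1},x_{j_2})+d(c'_{i_2},x_{j_1})$ and a companion $\max$ inequality on the one-dimensional projected distances, then invokes an auxiliary claim (if $a+b\le c+d$ and $a,b,c\le d$ then $a^2+b^2\le c^2+d^2$), and finally lifts back to $\R^d$ via the Pythagorean theorem. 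Your direct identity for the swap difference is cleaner and bypasses that detour entirely; both proofs also handle the relabeling-by-projection step only implicitly, and you are right to flag it.

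One small slip: the bracketed expression you wrote is actually the \emph{old} cost minus the \emph{new} cost, and it evaluates to $2(x_b-x_a)(u_j-u_i)\ge 0$, not $2(x_b-x_a)(u_i-u_j)$. The conclusion that the swap does not increase the objective is of course unaffected.
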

	
	\begin{proof}
		Let $c'_i$ denote the projection of point $c_i$ to the real line and assume that $c'_1\leq c'_2\leq \ldots\leq c'_k$. We slightly abuse the notation by regarding point $c'_i$ as a real value.
		We prove the lemma by contradiction.
		Let $C_1,\ldots, C_k$ be the optimal fair clustering.
		By contradiction we assume that there exists $i_1<i_2$ and $j_1<j_2$ such that $x_{j_1}\in C_{i_2}$ and $x_{j_2}\in C_{i_1}$.
		By the definitions of $c'_{i_1}$ and $c'_{i_2}$, we have that
		\begin{eqnarray}
		\label{eq:lm1}
		d(c'_{i_1},x_{j_1})+d(c'_{i_2},x_{j_2}) \leq d(c'_{i_1},x_{j_2})+d(c'_{i_2},x_{j_1}),
		\end{eqnarray}
		and
		\begin{eqnarray}
		\label{eq:lm2}
		\max\left\{d(c'_{i_1},x_{j_1}), d(c'_{i_2},x_{j_2})\right\}\leq \max\left\{ d(c'_{i_1},x_{j_2}), d(c'_{i_2},x_{j_1})\right\}.
		\end{eqnarray}
		Combining Inequalities~\eqref{eq:lm1} and~\eqref{eq:lm2}, we argue that
		\begin{eqnarray}
		\label{eq:key}
		d^2(c'_{i_1},x_{j_1})+d^2(c'_{i_2},x_{j_2}) \leq d^2(c'_{i_1},x_{j_2})+d^2(c'_{i_2},x_{j_1})
		\end{eqnarray}
		by proving the following claim.
		\begin{claim}
			Suppose $a,b,c,d\geq 0$, $a+b\leq c+d$ and $a,b,c\leq d$.
			Then $a^2+b^2\leq c^2+d^2$.
		\end{claim}
		\begin{proof}
			If $a+b\leq d$, then we have $a^2+b^2\leq (a+b)^2\leq d^2\leq c^2+d^2$.
			So we assume that $a+b>d$.
			Let $e=a+b-d>0$.
			Since $a+b\leq c+d$, we have $e^2\leq c^2$.
			Hence, it suffices to prove that $a^2+b^2\leq e^2+d^2$.
			Note that
			\[
			e^2+d^2=(a+b-d)^2+d^2 = a^2+b^2+(d-a)(d-b) \geq a^2+b^2,
			\]
			which completes the proof.
		\end{proof}
		Now we come back to prove Lemma~\ref{lm:batch_property}.
		We have the following inequality.
		\begin{align*}
		& d^2(x_{j_1},c_{i_1})+d^2(x_{j_2},c_{i_2}) & \\
		= & d^2(x_{j_1},c'_{i_1})+d^2(c'_{i_1},c_{i_1})+d^2(x_{j_2},c'_{i_2})+d^2(c'_{i_2},c_{i_2}) & (\text{The Pythagorean theorem}) \\
		\leq & d^2(x_{j_1},c'_{i_2})+d^2(c'_{i_1},c_{i_1})+d^2(x_{j_2},c'_{i_1})+d^2(c'_{i_2},c_{i_2}) & (\text{Ineq.~\eqref{eq:key}}) \\
		= & d^2(x_{j_1},c_{i_2})+d^2(x_{j_2},c_{i_1}). & (\text{The Pythagorean theorem})
		\end{align*}
		It contradicts with the assumption that $x_{j_1}\in C_{i_2}$ and $x_{j_2}\in C_{i_1}$.
		Hence, we complete the proof.
	\end{proof}
	
	Now we are ready to give the following theorem.

\begin{theorem}[\bf Coreset for fair $k$-means when $X$ lies on a line]
	\label{thm:kmeans_1D}
	Algorithm~\ref{alg:kmeans_1D} outputs an $\eps/3$-coreset for fair $k$-means clustering of $X$ in time $O(|X|)$.
\end{theorem}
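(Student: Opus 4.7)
The plan is to mirror the proof of Theorem~\ref{thm:kmedian_1D} in Section~\ref{subsec:proof_kmedian_1D}, exploiting the stronger structural property available for $k$-means. The running time bound is immediate: a single left-to-right sweep groups the sorted points into batches while maintaining $\Delta(B)$ incrementally, and for each batch the two-point representation guaranteed by Lemma~\ref{lm:batch_property} is computed in $O(|B|)$ time, so the total cost is $O(|X|)$. The heart of the proof is the coreset guarantee $|\kdist_2(S,F,C)-\kdist_2(X,F,C)| \le (\eps/3)\kdist_2(X,F,C)$.

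Fix an assignment constraint $F \in \Z_{\geq 0}^k$ and a $k$-subset $C = \{c_1,\ldots,c_k\} \subseteq \R^d$. Rotate so $X$ lies on the $x$-axis and let $c_1' \leq \cdots \leq c_k'$ denote the sorted projections of $C$. By Lemma~\ref{lm:kmeans_contiguous}, the optimal fair clustering partitions $X$ into exactly $k$ contiguous intervals $C_1,\ldots,C_k$, a noticeable strengthening of the $2k-1$-interval bound used in the $k$-median case. Call a batch $B \in \calB(X)$ \emph{good} if $B \subseteq C_i$ for some $i$ and $\calI(B)$ contains no $c_j'$, and \emph{bad} otherwise. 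There are at most $k-1$ batches straddling a cluster boundary plus at most $k$ batches containing a center projection, so at most $O(k)$ bad batches overall.

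Good batches contribute zero error: if $B \subseteq C_i$ and $c_i' \notin \calI(B)$, the identity in Lemma~\ref{lm:batch_property}, $\kdist_2(B,c_i) = \Delta(B) + |B|\cdot d^2(c_i,\overline{B}) = \kdist_2(\calJ(B),c_i)$, preserves the cost exactly when the mass $|B|$ is re-routed from $B$ to $\calJ(B)$ and entirely assigned to $c_i$. The analysis therefore concentrates on the $O(k)$ bad batches. Borrowing the Lemma~2.8 analysis of~\cite{harpeled2007smaller}, each bad batch $B$ contributes an absolute error of at most $O\bigl(\sqrt{\Delta(B)\cdot \kdist_2(B, f(B))} + \Delta(B)\bigr)$, where $f(B)$ is the center(s) to which the mass of $B$ is assigned. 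With $\Delta(B) \leq \xi = \eps^2 \OPT/(200 k^2)$, a Cauchy-Schwarz summation over the $O(k)$ bad batches yields total error $O\bigl(\sqrt{k\cdot \xi \cdot \kdist_2(X,F,C)}\bigr) + O(k\xi)$, which combined with $\OPT \leq \kdist_2(X,F,C)$ gives the desired $(\eps/3)\cdot \kdist_2(X,F,C)$ bound. The reverse direction $\kdist_2(X,F,C) \leq (1+\eps/3)\kdist_2(S,F,C)$ is obtained by starting from the optimal fair fractional clustering of $S$, applying the same contiguous-interval structure, and transporting the assignment back to $X$ through the inverse batch map, as in Section~\ref{subsec:proof_kmedian_1D}.

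The main obstacle I anticipate is the bad-batch accounting: within a single boundary batch, the two weighted representatives $\{q_1,q_2\}$ of $\calJ(B)$ may have to be split fractionally between two distinct centers in order to satisfy $F$, and one must compare the minimum-cost such fractional assignment on $\calJ(B)$ with $\kdist_2(B\cap C_{i_1},c_{i_1}) + \kdist_2(B\cap C_{i_2},c_{i_2})$ on $X$. Since $\Delta(B) \leq \xi$ simultaneously controls the spread of $B$ and the cost of any such fractional rearrangement, the discrepancy should be absorbed into the Cauchy-Schwarz bound above, and the $\eps/3$-coreset guarantee goes through.
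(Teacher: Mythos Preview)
Your proposal is correct and follows essentially the same approach as the paper: decompose into good and bad batches, invoke Lemma~\ref{lm:batch_property} for exact equality on good batches, and control the $O(k)$ bad batches via a Cauchy--Schwarz bound of the form $O(\sqrt{\xi\cdot\kdist_2(X,F,C)})+O(k\xi)$. Two minor simplifications the paper exploits that you overlook: the identity $\kdist_2(B,q)=\kdist_2(\calJ(B),q)$ in Lemma~\ref{lm:batch_property} holds for \emph{every} $q\in\R^d$, so batches containing a center projection are still good and only the $k-1$ boundary-straddling batches are bad; and because Lemma~\ref{lm:kmeans_contiguous} determines the optimal contiguous assignment on both $X$ and $S$ explicitly, the paper bounds the absolute difference $|\kdist_2(S,F,C)-\kdist_2(X,F,C)|$ in a single pass rather than arguing the two directions separately.
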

	
	\begin{proof}
		The proof is similar to that of Theorem 3.5 in~\cite{harpeled2007smaller}.
		The running time analysis is exactly the same.
		Hence, we only focus on the correctness analysis in the following.
		We first rotate the space such that the line is on the $x$-axis and assume that $x_1\leq x_2\leq \ldots \leq x_n$.
		Given an assignment constraint $F\in \R^k$ and a $k$-subset $C=\left\{c_1,\ldots, c_k\right\}\subseteq \R^d$, let $c'_i$ denote the projection of point $c_i$ to the real line and assume that $c'_1\leq c'_2\leq \ldots\leq c'_k$.
		Our goal is to prove that
		\[
		\left|\kdist_2(S,F,C)-\kdist_2(X,F,C)\right| \leq \frac{\eps}{3}\cdot \kdist_2(X,F,C).
		\]
		By Lemma~\ref{lm:kmeans_contiguous}, we have that the optimal fair clustering of $X$ should be
		\[
		C_i:=\left\{x_{1+\sum_{j<i} F_j},\ldots, x_{\sum_{j\leq i} F_j} \right\}
		\]
		for each $i\in [k]$.
		Hence, $\calI(C_1),\ldots,\calI(C_k)$ are disjoint intervals.
		Similarly, the optimal fair clustering of $X$ should be to scan weighted points in $S$ from left to right and cluster points of total weight $F_i$ to $c_i$.\footnote{Recall that a weighted point can be partially assigned to more than one cluster.}
		If a batch $B\in \calB(X)$ lies completely within some interval $\calI(C_i)$, then it does not contribute to the overall difference $\left|\kdist_2(S,F,C)-\kdist_2(X,F,C)\right|$ by Lemma~\ref{lm:batch_property}.

		Thus, the only problematic batches are those that contain an endpoint of $\calI(C_1),\ldots,\calI(C_k)$.
		There are at most $k-1$ such batches.
		Let $B$ be one such batch and $\calJ(B)=\left\{q_1,q_2\right\}$ be constructed as in Lemma~\ref{lm:batch_property}.
		For $i\in [k]$, let $V_i:= \calI(C_i)\cap B$.
		Let $T$ denote the collection of the $w_1$ left side points within $B$ and $T' = B\setminus T$.
		Note that $w_1$ may be fractional and hence $T$ may include a fractional point.
		Denote
		\[
		\eta:= \sum_{i\in [k]} \sum_{x\in V_i\cap T} d^2(x,q_1)+ \sum_{i\in [k]} \sum_{x\in V_i\cap T'} d^2(x,q_2).
		\]
		We have that
		\begin{eqnarray}
		\label{eq:thm_boundeta}
		\begin{split}
		\eta = & \sum_{i\in [k]} \sum_{x\in V_i\cap T} \left(d(x,\overline{B})-d(q_1,\overline{B} ) \right)^2+ \sum_{i\in [k]} \sum_{x\in V_i\cap T'} \left(d(x,\overline{B})-d(q_2,\overline{B} ) \right)^2 & \\
		\leq & \sum_{i\in [k]} \sum_{x\in V_i\cap T} \left(d^2(x,\overline{B})+ d^2(q_1,\overline{B} ) \right) +  \sum_{i\in [k]} \sum_{x\in V_i\cap T'} \left(d^2(x,\overline{B})+ d^2(q_2,\overline{B} ) \right) & \\
		= & \Delta(B) + \Delta(\calJ(B)) = 2\Delta(B) \quad \quad \quad \quad (\text{Lemma~\ref{lm:batch_property}}) &\\
		\leq & \frac{\eps^2 \OPT}{100 k} \quad \quad \quad \quad (\text{Construction of $B$}). &
		\end{split}
		\end{eqnarray}
		\noindent
		Then we can upper bound the contribution of $B$ to the overall difference $\left|\kdist_2(S,F,C)-\kdist_2(X,F,C)\right|$ by
		\begin{eqnarray}
		\label{eq:thm1}
		\begin{split}
		& \left|\sum_{i\in [k]} \sum_{x\in V_i\cap T} \left(d^2(x,c_i) - d^2(q_1,c_i) \right) + \sum_{i\in [k]} \sum_{x\in V_i\cap T'} \left(d^2(x,c_i) - d^2(q_2,c_i) \right) \right| & \\	
		\leq & 	\sum_{i\in [k]} \sum_{x\in V_i\cap T} \left|d^2(x,c_i) - d^2(q_1,c_i) \right| + \sum_{i\in [k]} \sum_{x\in V_i\cap T'} \left|d^2(x,c_i) - d^2(q_2,c_i) \right| & \\
		= & \sum_{i\in [k]} \sum_{x\in V_i\cap T} d(x,q_1)\left(d(x,c_i)+ d(q_1,c_i)\right)  + \sum_{i\in [k]} \sum_{x\in V_i\cap T'} d(x,q_2)\left(d(x,c_i) + d(q_2,c_i) \right)  & \\	
		\leq & 	\sum_{i\in [k]} \sum_{x\in V_i\cap T} d(x,q_1)\left(2d(x,c_i)+ d(x,q_1)\right)  + \sum_{i\in [k]} \sum_{x\in V_i\cap T'} d(x,q_2)\left(2d(x,c_i) + d(x,q_2) \right)  & \\
		= & \sum_{i\in [k]} \sum_{x\in V_i\cap T} d^2(x,q_1)+ \sum_{i\in [k]} \sum_{x\in V_i\cap T'} d^2(x,q_2) & \\
		& + 2\sum_{i\in [k]} \sum_{x\in V_i\cap T}  d(x,q_1) d(x,c_i)  + 2\sum_{i\in [k]} \sum_{x\in V_i\cap T'} d(x,q_2) d(x,c_i)   & \\
		\leq & \eta+ 2\sqrt{\eta}\sqrt{\sum_{i\in [k]} \sum_{x\in V_i}  d^2(x,c_i)} \quad \quad \quad \quad (\text{Defn. of $\eta$ and Cauchy-Schwarz}) &\\
		\leq & \frac{\eps^2 \OPT}{50k} + \frac{2\eps}{7k} \sqrt{\OPT \cdot \kdist_2(X,F,C)}\quad \quad \quad \quad   (\text{Ineq.~\eqref{eq:thm_boundeta}}) &\\
		\leq &  \frac{\eps^2 \OPT}{100k} + \frac{2\eps}{10k} \cdot \frac{\OPT+\sum_{i\in [k]} \sum_{x\in V_i}  d^2(x,c_i)}{2} & \\
		\leq & \frac{\eps \OPT}{5k} + \frac{\eps \sum_{i\in [k]} \sum_{x\in V_i}  d^2(x,c_i)}{10k}. &
		\end{split}
		\end{eqnarray}
		Since there are at most $k-1$ such batches, we conclude that the their total contribution to the error $\left|\kdist_2(S,F,C)-\kdist_2(X,F,C)\right|$ can be upper bounded by
		\[
		\frac{\eps \OPT}{5} + \frac{\eps \kdist_2(X,F,C) }{10k} \leq \frac{\eps}{3} \cdot \kdist_2(X,F,C).
		\]
		It completes the proof.
	\end{proof}
	
	\subsection{Extending to higher dimension}
	\label{subsec:kmeans_highdim}
	
	The extension is almost the same as fair $k$-median, except that we apply Theorem~\ref{thm:kmeans_1D} to construct the coreset on each line.
	Let $S$ denote the combination of coresets generated from all lines.

	\begin{proof}[Proof of Theorem~\ref{thm:kmeans}]
		By the above construction, the coreset size is $O(k^3 \eps^{-d-1})$.
		For the correctness, Theorem 3.6 in~\cite{harpeled2007smaller} applies an important fact that for any $k$-subset $C\subseteq \R^d$,
		\[
		\kdist_2(X,C^\star)\leq c\cdot\kdist_2(X,C).
		\]
		In our setting, we have a similar property.
		Note that for any given \texttt{assignment constraint} $F\in \R^k$ and any $k$-subset $C\subseteq \R^d$, we have
		\[
		\kdist_2(X,C^\star)\leq c\cdot\kdist_2(X,F,C).
		\]
		Then combining this fact with Theorem~\ref{thm:kmeans_1D}, we have that $S$ is an $\eps$-coreset for the fair $k$-means clustering problem, by the same argument as that of Theorem 3.6 in~\cite{harpeled2007smaller}.
	\end{proof}
\section{Empirical results}
\label{sec:empirical}
We implement our algorithm and evaluate its performance on real datasets.
The implementation mostly follows our description of algorithms, but
a vanilla implementation would bring in an $\eps^{-d}$ factor in the coreset size.
To avoid this, as observed in Remark~\ref{remark:simplify_net}, we may actually emit any set of rays as long as the total projection cost is bounded, instead of $\eps^{-d}$ rays.
We implement this idea by finding the smallest integer $m$ and $m$ lines, such that the minimum cost of projecting data onto $m$ lines is within the error threshold.
In our implementation for fair $k$-means, we adopt the widely used Lloyd's heuristic~\cite{lloyd1982least} to find the $m$ lines,
where the only change to Lloyd's heuristic is that, for each cluster, we need to find a \emph{line} that minimizes the projection cost instead of a point, and we use SVD to efficiently find this line optimally.
Unfortunately, the above approach does not work for fair $k$-median, as the SVD does not give the optimal line.
As a result, we still need to construct the $\eps$-net, but we alternatively employ some heuristics to find the net adaptively w.r.t. the dataset.

Our evaluation is conducted on four datasets: \textbf{Adult} (\textasciitilde 50k), \textbf{Bank} (\textasciitilde 45k), \textbf{Diabetes} (\textasciitilde 100k) and \textbf{Athlete} (\textasciitilde 200k)~\cite{chierichetti2017fair,schmidt2018fair,bera2019fair}.
For all datasets, we choose numerical features to form a vector in $\mathbb{R}^d$ for each record, where $d=6$ for \textbf{Adult}, $d=10$ for \textbf{Bank}, $d=29$ for \textbf{Diabetes} and $d=3$ for \textbf{Athlete}.
We use $\ell_2$ to measure the distance of these vectors.
We choose two sensitive types for the first three datasets: sex and marital for \textbf{Adult} (9 groups, $\Gamma = 14$); marital and default for \textbf{Bank} (7 groups, $\Gamma = 12$); sex and age for \textbf{Diabetes} (12 groups, $\Gamma = 20$), and we choose a binary sensitive type sex for \textbf{Athlete} (2 groups, $\Gamma = 2$).
In addition, in Section~\ref{sec:other_empirical}, we will also discuss how the following affects the result: a) choosing a binary type as the sensitive type, or b) normalization of the dataset.
We pick $k = 3$ (i.e. number of clusters) throughout our experiment.
We define the \emph{empirical error} as $| \frac{\mathcal{K}_z(S, F, C)}{\mathcal{K}_z(X, F, C)} - 1 |$ (which is the same measure as $\epsilon$) for some $F$ and $C$.
To evaluate the empirical error, we draw 500 independent random samples of $(F, C)$ and report the maximum empirical error among these samples.
For each $(F,C)$, the fair clustering objectives $\kdist_z(\cdot,F,C)$ may be formulated as integer linear programs (ILP).
We use \textbf{CPLEX}~\cite{ibm2015ibm} to solve the ILP's, report the average running time\footnote{The experiments are conducted on a 4-Core desktop CPU with 64 GB RAM.} $T_X$ and $T_S$ for evaluating the objective on dataset $X$ and coreset $S$ respectively, and also report the running time $T_C$ for constructing coreset $S$.

For both $k$-median and $k$-means, we employ \emph{uniform sampling} (\textbf{Uni}) as a baseline, in which we partition $X$ into $\Gamma$ parts according to distinct $\calP_x$'s (the collection of groups that $x$ belongs to) and take uniform samples from each collection.
Additionally, for $k$-means, we select another baseline from a recent work~\cite{schmidt2018fair} that presented a coreset construction for fair $k$-means, whose implementation is based on the \textbf{BICO} library which is a high-performance coreset-based library for computing k-means clustering~\cite{fichtenberger2013BICO}.
We evaluate the performance of our coreset for fair $k$-means against \textbf{BICO} and \textbf{Uni}.
As a remark of \textbf{BICO} and \textbf{Uni} implementations, they do not support specifying parameter $\eps$, but a hinted size of the resulted coreset.
Hence, we start with evaluating our coreset, and set the hinted size for \textbf{Uni} and \textbf{BICO} as the size of our coreset.

{
We also showcase the speed-up to two recently published approximation algorithms by applying a 0.5-coreset. 
The first algorithm is a practically efficient, $O(\log n)$-approximate algorithm for fair $k$-median~\cite{backurs2019scalable} that works for a binary type, referred to as \textbf{FairTree}.
The other one is a bi-criteria approximation algorithm~\cite{bera2019fair} for both fair $k$-median and $k$-means, referred to as \textbf{FairLP}.
We slightly modify the implementations of \textbf{FairTree} and \textbf{FairLP} to enable them work with our coreset, particularly making them handle weighted inputs efficiently.
We do experiments on a large dataset \textbf{Census1990} which consists of about 2.5 million records (where we select $d = 13$ features and a binary sensitive type), in addition to the above-mentioned \textbf{Adult}, \textbf{Bank}, \textbf{Diabetes} and \textbf{Athlete} datasets. 
}

\begin{table}[ht]
	\centering
	\captionsetup{font=small}
	\small
	\caption{performance of $\eps$-coresets for fair $k$-median w.r.t. varying $\eps$.
	}
	\label{tab:kmedian}
	\begin{tabular}{cccccccc}
		\toprule
		& \multirow{2}{*}{$\eps$} & \multicolumn{2}{c}{emp. err.} & \multirow{2}{*}{size} &  \multirow{2}{*}{$T_{S}$ (ms)} & \multirow{2}{*}{$T_{C}$ (ms)} & \multirow{2}{*}{$T_{X}$ (ms)} \\
		& &  Ours & \textbf{Uni} & & & &  \\
		\midrule
		\multirow{7}{*}{\rotatebox{90}{\textbf{Adult}}} &
		10\% & 2.36\% & 12.28\% & 262 & 13 &  408 & 7101 \\
		& 15\% & 1.96\% & 19.86\% & 210  & 11 & 318 & - \\
		& 20\% & 4.36\%  & 17.17\%  & 215  & 12  & 311   & - \\
		& 25\% & 5.48\% & 20.71\% &  180  & 10 & 283  & - \\
		& 30\% & 4.46\% & 15.12\% & 161  & 9 & 295    & - \\
		& 35\% & 6.37\% & 32.54\%  & 171 & 10  & 267    & - \\
		& 40\% & 8.52\%  & 31.96\%  & 139  & 9  & 282 & -  \\
		\midrule
		\multirow{7}{*}{\rotatebox{90}{\textbf{Bank}}} &
		10\% & 1.45\% & 5.32\% & 2393 & 111  & 971  & 5453 \\
		& 15\% & 2.17\% & 5.47\% & 1130 & 53 & 704  & - \\
		& 20\% & 2.24\%  & 3.38\%  & 1101 & 50  & 689   & - \\
		& 25\% & 3.39\% & 7.26\% &  534  & 25  & 525 & - \\
		& 30\% & 4.18\% & 14.60\% & 506  & 24  & 476    & - \\
		& 35\% & 7.29\% & 13.50\% & 512  & 24  & 517    & - \\
		& 40\% & 5.35\%  & 10.53\%  & 293    & 14  & 452  & -  \\
		\midrule
		\multirow{7}{*}{\rotatebox{90}{\textbf{Diabetes}}} &
		10\% & 0.55\% & 6.38\% & 85822  & 12112 &  141212 & 17532 \\
		& 15\% & 0.86\% & 14.56\% & 65093  & 8373 &  54155  & - \\
		& 20\% & 1.62\%  & 15.44\%  & 34271  & 3267  & 16040   & - \\
		& 25\% & 2.43\% & 7.62\% &  17155  & 1604 & 8071  & - \\
		& 30\% & 3.61\% & 1.92\% & 6693   & 411  & 5017  & - \\
		& 35\% & 4.31\% & 2.11\% & 4359  & 256 & 4063  & - \\
		& 40\% & 5.33\%  & 3.67\%  & 2949   & 160 & 3916  & -  \\
		\midrule
		\multirow{7}{*}{\rotatebox{90}{\textbf{Athlete}}} &
		10\% & 1.14\% & 2.87\% & 3959  & 96 & 8141  & 74851 \\
		& 15\% & 2.00\% & 1.50\% & 1547  & 38 & 5081  & - \\
		& 20\% & 2.59\%  & 4.38\%  & 685  & 19  & 3779  & - \\
		& 25\% & 3.83\% & 7.67\% & 439  & 13  & 3402  & - \\
		& 30\% & 4.86\% & 4.98\% & 316   & 11  & 2763   & - \\
		& 35\% & 6.31\% & 7.47\% & 160  & 8 & 2496  & - \\
		& 40\% & 8.25\%  & 16.59\%  & 112 & 7  & 2390 & -  \\
		\bottomrule
	\end{tabular}
	
\end{table}

\begin{table}[ht]
	\centering
	\captionsetup{font=small}
	\small
	\caption{performance of $\eps$-coresets for fair $k$-means w.r.t. varying $\eps$.
	}
	\label{tab:kmeans}
	
		\begin{tabular}{cccccccccc}
		\toprule
		& \multirow{2}{*}{$\eps$} & \multicolumn{3}{c}{emp. err.} & \multirow{2}{*}{size} &  \multirow{2}{*}{$T_{S}$ (ms)} & \multicolumn{2}{c}{$T_{C}$ (ms)} & \multirow{2}{*}{$T_{X}$ (ms)} \\
		& &  Ours  & \textbf{BICO} & \textbf{Uni} & &  & Ours  & \textbf{BICO} &  \\
		\midrule
		\multirow{7}{*}{\rotatebox{90}{\textbf{Adult}}} &
		10\% & 0.28\% & 1.04\% & 10.63\% & 880 & 44 & 1351 & 786 & 7404 \\
		& 15\% & 0.56\% & 2.14\% & 4.48\% & 714  & 36  & 561 & 755 &  - \\
		& 20\% & 0.55\% & 1.12\% & 2.87\% & 610 & 29 &  511 & 788 & - \\
		& 25\% & 1.37\% & 2.29\% & 17.90\% & 543 &  27 &  526 & 781  & - \\
		& 30\% & 1.17\% & 4.06\% & 19.91\% & 503 &  26 &  495 & 750  & - \\
		& 35\% & 1.63\% & 4.17\% & 29.85\% & 457 &  24 & 512 & 787 & - \\
		& 40\% & 2.20\% & 4.45\% & 48.10\% & 433 &  22 &  492 & 768  & - \\
		\midrule
		\multirow{7}{*}{\rotatebox{90}{\textbf{Bank}}} &
		10\% & 2.85\% & 2.71\% & 30.68\% & 409 & 19 &  507 & 718 & 5128 \\
		& 15\% & 2.93\% & 4.34\% & 25.44\% & 328  & 16 &  512 & 687 & - \\
		& 20\% & 2.93\% & 4.59\% & 45.09\% & 280 & 14  & 478 & 712  & - \\
		& 25\% & 2.61\% & 4.99\% & 20.35\% & 242 & 12  & 509 & 694  & - \\
		& 30\% & 2.68\% & 6.10\% & 24.82\% & 230 & 11  & 531 & 711  & - \\
		& 35\% & 2.41\% & 5.85\% & 36.48\% & 207 & 11 & 528 & 728  & - \\
		& 40\% & 2.30\% & 5.66\% & 33.42\% & 194 & 10  & 505 & 690 & - \\
		\midrule
		\multirow{7}{*}{\rotatebox{90}{\textbf{Diabetes}}} &
		10\% & 4.39\% & 10.54\% & 1.91\% & 50163 & 5300  & 65189 & 2615  & 16312 \\
		& 15\% & 7.99\% & 5.83\% & 10.74\% & 11371  & 772 & 11664 & 1759  & - \\
		& 20\% & 11.24\% & 11.32\% & 4.41\% & 3385  & 168  & 5138 & 1544  & - \\
		& 25\% & 14.91\% & 15.76\% & 7.87\% & 1402  & 65 & 2999 & 1491 & - \\
		& 30\% & 14.52\% & 20.54\% & 13.46\% & 958  & 44  & 2680 & 1480& - \\
		& 35\% & 14.20\% & 20.72\% & 11.52\% & 870 & 41 & 2594 & 1488  & - \\
		& 40\% & 13.95\% & 22.05\% & 10.92\% & 775 & 35 & 2657 & 1462  & - \\
		\midrule
		\multirow{7}{*}{\rotatebox{90}{\textbf{Athlete}}} &
		10\% & 5.43\% & 4.94\% & 10.96\% & 1516 & 36 & 14534 & 1160 & 73743 \\
		& 15\% & 7.77\% & 11.72\% & 10.08\% & 491 & 14 & 6663 & 1099 & - \\
		& 20\% & 11.41\% & 21.31\% & 10.62\% & 213 & 9 & 3566 & 1090  & - \\
		& 25\% & 13.25\% & 26.33\% & 11.48\% & 112 & 7  & 2599 & 1114  & - \\
		& 30\% & 13.18\% & 29.97\% & 16.93\% & 98  & 7  & 2591 & 1076 & - \\
		& 35\% & 12.41\% & 24.86\% & 27.60\% & 92 & 7 & 2606 & 1056  & - \\
		& 40\% & 13.01\% & 29.74\% & 152.31\% & 83 & 6 & 2613 & 1066 & - \\
		\bottomrule
	\end{tabular}
\end{table}

\begin{table}[ht]
	\centering
	\captionsetup{font=small}
	\small
	\caption{speed-up of fair clustering algorithms using our coreset. $\text{obj}_{\text{ALG}}/\text{obj}_{\text{ALG}}$ is the runtime/clustering objective w/o our coreset, $T_{\text{ALG}}'/\text{obj}'_{\text{ALG}}$ is the runtime/clustering objective on top of our coreset, and $T_{\text{C}}$ is time to construct coreset.}
	\label{tab:speed-up_backurs}
	
	\begin{tabular}{ccccccc}
		\toprule
		& ALG & $\text{obj}_{\text{ALG}}$ & $\text{obj}'_{\text{ALG}}$ & $T_{\text{ALG}}$ (s) & $T_{\text{ALG}}'$ (s) & $T_{\text{C}}$ (s) \\
		\midrule
		\multirow{2}{*}{\textbf{Adult}} &\textbf{FairTree} ($z=1$) & $2.09\times 10^9$ & $1.23\times 10^9$ & 12.62 & 0.38 & 0.63 \\
		& \textbf{FairLP} ($z=2$) & $1.23\times 10^{14}$ & $1.44\times 10^{14}$ & 19.92 & 0.20 & 1.03 \\
		\midrule
		\multirow{2}{*}{\textbf{Bank}} &\textbf{FairTree} ($z=1$) & $5.69\times 10^6$ & $4.70\times 10^6$ & 14.62 & 0.64 & 0.60 \\
		& \textbf{FairLP} ($z=2$) & $1.53\times 10^{9}$ & $1.46\times 10^{9}$ & 17.41 & 0.08 & 0.50 \\
		\midrule
		\multirow{2}{*}{\textbf{Diabetes}} &\textbf{FairTree} ($z=1$) & $1.13\times 10^6$ & $9.50\times 10^5$ & 19.26 & 1.70 & 2.96 \\
		& \textbf{FairLP} ($z=2$) & $1.47\times 10^{7}$ & $1.08\times 10^{7}$ & 55.11 & 0.41 & 2.61 \\
		\midrule
		\multirow{2}{*}{\textbf{Athlete}} &\textbf{FairTree} ($z=1$) & $2.50\times 10^6$ & $2.42\times 10^6$ & 29.94 & 1.34 & 2.35 \\
		& \textbf{FairLP} ($z=2$) & $3.33\times 10^{7}$ & $2.89\times 10^{7}$ & 37.50 & 0.03 & 2.42 \\
		\midrule
		\multirow{2}{*}{\textbf{Census1990}} &\textbf{FairTree} ($z=1$) & $9.38\times 10^6$ & $7.65\times 10^6$ & 450.79 & 23.36 & 20.28 \\
		& \textbf{FairLP} ($z=2$) & $4.19\times 10^{7}$ & $1.32\times 10^{7}$ & 1048.72 & 0.06 & 31.05 \\
		\bottomrule
	\end{tabular}
	
\end{table}

\subsection{Results}
\label{subsec:result}

Table~\ref{tab:kmedian} and~\ref{tab:kmeans} summarize the accuracy-size trade-off of our coresets for fair $k$-median and $k$-means respectively, under different error guarantee $\eps$.
Since the coreset construction time $T_C$ for \textbf{Uni} is very small (usually less than 50 ms) we do not report it in the table.
From the table, a key finding is that the size of the coreset does not suffer from the $\eps^{-d}$ factor thanks to our optimized implementation.
As for the fair $k$-median, the empirical error of our coreset is well under control.
In particular, to achieve ~5\% empirical error, only less than 3 percents of data is necessary for all datasets, and this results in a \textasciitilde 200x acceleration in evaluating the objective and 10x acceleration even taking the coreset construction time into consideration.\footnote{
The same coreset may be used for clustering with any assignment constraints, so its construction time would be averaged out if multiple fair clustering tasks are performed.
}
Regarding the running time, our coreset construction time scales roughly linearly with the size of the coreset, which means our algorithm is output-sensitive.
The empirical error of \textbf{Uni} is comparable to ours on \textbf{Diabetes}, but the worst-case error is unbounded (2x-10x to our coreset, even larger than $\eps$) in general and seems not stable when $\eps$ varies.

Our coreset works well for fair $k$-means, and it also offers significant acceleration of evaluating the objective.
Compared with \textbf{BICO}, our coreset achieves smaller empirical error for fixed $\eps$ and the construction time is between 0.5x to 2x that of \textbf{BICO}.
Again, the empirical error of \textbf{Uni} could be 2x smaller than ours and \textbf{BICO} on \textbf{Diabetes}, but the worst-case error is unbounded in general.

Table~\ref{tab:speed-up_backurs} demonstrates the speed-up to \textbf{FairTree} and \textbf{FairLP} with the help of our coreset.
We observed that the adaption of our coresets offers a 5x-15x speed-up to \textbf{FairTree} and a 15x-30x speed-up to \textbf{FairLP} for all datasets, even taking the coreset construction time into consideration.
Specifically, the runtime on top of our coreset for \textbf{FairLP} is less than 1s for all datasets, which is extremely fast.
We also observe that the clustering objective $\text{obj}'_\text{ALG}$ on top of our coresets is usually within 0.6-1.2 times of $\text{obj}_\text{ALG}$ which is the objective without the coreset (noting that coresets might shrink the objective). 
%
%
The only exception is \textbf{FairLP} on \textbf{Census1990}, in which $\text{obj}'_\text{ALG}$ is only 35\% of $\text{obj}_\text{ALG}$.
A possible reason is that in the implementation of \textbf{FairLP}, an important step is to compute an approximate (unconstrained) $k$-means clustering solution on the dataset by employing the \emph{sklearn} library. 
However, \emph{sklearn} tends to trade accuracy for speed when the dataset gets large. 
As a result, \textbf{FairLP} actually finds a better approximate $k$-means solution on the coreset than on the large dataset \textbf{Census1990} and hence applying coresets can achieve a much smaller clustering objective.

	\section{Conclusion and future work}
	\label{sec:conclusion}
	
	This paper constructs $\eps$-coresets for the fair $k$-median/means clustering problem of size independent on the full dataset, and when the data may have multiple, non-disjoint types.
	Our coreset for fair $k$-median is the first known coreset construction to the best of our knowledge.
	For fair $k$-means, we improve the coreset size of the prior result~\cite{schmidt2018fair}, and extend it to multiple non-disjoint types.
	Our correctness analysis depends on several new geometric observations that may have independent interest.
	The empirical results show that our coresets are indeed much smaller than the full dataset and result in significant reductions in the running time of computing the fair clustering objective.

	Our work leaves several interesting futural directions.
	For unconstrained clustering, there exist several works using the sampling approach such that the coreset size does not depend exponentially on the Euclidean dimension $d$.
	It is interesting to investigate whether sampling approaches can be applied for constructing fair coresets and achieve similar size bound as the unconstrained setting.
	Another interesting direction is to construct coresets for general fair $(k,z)$-clustering beyond $k$-median/means/center.

	\bibliographystyle{plain}
	\bibliography{references}

\appendix
\section{Other Empirical Results}
\label{sec:other_empirical}

In this section, we report the results for a) selecting a binary sensitive type (without normalizing the data), and b) normalizing each dimension to be within $[0,1]$ so that features with large numerical range could not dominate the distance measure.

\subsection{Results: with a binary type}

We choose a binary type for each dataset: sex for \textbf{Adult} and \textbf{Diabetes} and marital for \textbf{Bank}.
The results for the experiments w.r.t. binary types may be found in Tables~\ref{tab:kmedian_single} and~\ref{tab:kmeans_single}.
The observation is that \textbf{Uni} could not achieve smaller empirical errors compared to ours, even for the \textbf{Diabetes} dataset.
A possible explanation is that with more types, the dataset may be better partitioned with respect to types so that \textbf{Uni} performs better compared with the binary type case.

\begin{table}[ht]
	\centering
	\captionsetup{font=small}
	\small
	\caption{performance of $\eps$-coresets for fair $k$-median with one sensitive type w.r.t. varying $\eps$.
	}
	\label{tab:kmedian_single}
	\begin{tabular}{cccccccc}
		\toprule
		& \multirow{2}{*}{$\eps$} & \multicolumn{2}{c}{emp. err.} & \multirow{2}{*}{size} &  \multirow{2}{*}{$T_{S}$ (ms)} & \multirow{2}{*}{$T_{C}$ (ms)} & \multirow{2}{*}{$T_{X}$ (ms)} \\
		& &  Ours & \textbf{Uni} & & & &  \\
		\midrule
		\multirow{7}{*}{\rotatebox{90}{\textbf{Adult}}} &
		10\% & 2.97\% & 32.66\% & 46  & 5 & 363  & 3592 \\
		& 15\% & 3.32\% & 82.47\% & 37  & 5  & 332  & - \\
		& 20\% & 5.39\%  & 30.24\%  & 36  & 5  & 295  & - \\
		& 25\% & 4.44\% & 42.81\% &  28 & 5  & 308  & - \\
		& 30\% & 7.00\% & 30.67\% & 26   & 5  & 304    & - \\
		& 35\% & 6.82\% & 22.46\%  & 30  & 5  & 311    & - \\
		& 40\% & 6.20\%  & 23.55\%  & 24    & 5 & 308  & -  \\
		\midrule
		\multirow{7}{*}{\rotatebox{90}{\textbf{Bank}}} &
		10\% & 1.27\% & 10.08\% & 838  & 21 & 1264  & 2817 \\
		& 15\% & 2.58\% & 4.52\% & 292  & 11  & 652  & - \\
		& 20\% & 3.13\%  & 12.79\%  & 238  & 10 & 607  & - \\
		& 25\% & 3.01\% &16.74\% & 272  & 11  & 605 & - \\
		& 30\% & 4.31\% & 10.93\% & 193   & 9 & 513    & - \\
		& 35\% & 4.80\% & 12.42\% &140   & 7 & 543   & - \\
		& 40\% & 5.56\%  & 12.68\%  & 102    & 7 & 468  & -  \\
		\midrule
		\multirow{7}{*}{\rotatebox{90}{\textbf{Diabetes}}} &
		10\% & 1.23\% & 40.20\% & 51102  & 3766 & 143910  & 14414 \\
		& 15\% & 1.47\% & 14.28\% & 22811  & 909 & 45238  & - \\
		& 20\% & 2.12\%  & 1.84\%  & 7699  & 193  & 15366   & - \\
		& 25\% &2.76\% & 2.57\% & 3159  & 74  & 8402  & - \\
		& 30\% & 3.76\% & 3.47\% & 941   & 23  & 4710   & - \\
		& 35\% & 4.56\% & 4.78\% & 577   & 15 & 4367    & - \\
		& 40\% & 6.33\%  & 10.99\%  & 324    & 11  & 3642  & -  \\
		\bottomrule
	\end{tabular}
	
\end{table}

\begin{table}[ht]
	\centering
	\captionsetup{font=small}
	\small
	\caption{performance of $\eps$-coresets for fair $k$-means with one sensitive type w.r.t. varying $\eps$.
	}
	\label{tab:kmeans_single}
	
	\begin{tabular}{cccccccccc}
		\toprule
		& \multirow{2}{*}{$\eps$} & \multicolumn{3}{c}{emp. err.} & \multirow{2}{*}{size} &  \multirow{2}{*}{$T_{S}$ (ms)} & \multicolumn{2}{c}{$T_{C}$ (ms)} & \multirow{2}{*}{$T_{X}$ (ms)} \\
		& &  Ours  & \textbf{BICO} & \textbf{Uni} & &  & Ours  & \textbf{BICO} &  \\
		\midrule
		\multirow{7}{*}{\rotatebox{90}{\textbf{Adult}}} &
		10\% & 0.91\% & 1.16\% & 184.60\% & 209 & 9 & 547 & 397  & 3908 \\
		& 15\% & 0.78\% & 1.08\% & 30.07\% & 162 & 8 & 468 & 469  & - \\
		& 20\% & 0.584\% & 1.80\% & 63.80\% & 135  & 8  & 516 & 392  & - \\
		& 25\% & 1.40\% & 1.42\% & 33.16\% & 118  & 7  & 540 & 407  & - \\
		& 30\% & 1.58\% & 2.47\% & 52.42\% & 108 & 7  & 521 & 391  & - \\
		& 35\% & 2.29\% & 4.09\% & 100.10\% & 99 & 7 & 534 & 400 & - \\
		& 40\% & 1.79\% & 4.39\% & 90.48\% & 92  & 6  & 510 & 423 & - \\
		\midrule
		\multirow{7}{*}{\rotatebox{90}{\textbf{Bank}}} &
		10\% & 2.87\% & 5.11\% & 19.80\% & 127  & 9 & 1411 & 500  & 2662 \\
		& 15\% & 2.85\% & 5.63\% & 44.81\% & 100 & 7  & 611 & 518  & - \\
		& 20\% & 3.04\% & 4.47\% & 38.91\% & 84 & 6 & 518 & 484  & - \\
		& 25\% & 2.77\% & 6.97\% & 38.27\% & 72  & 7 & 530 & 516  & - \\
		& 30\% & 2.60\% & 6.59\% & 52.24\% & 68  & 6  & 585 & 492  & - \\
		& 35\% & 2.64\% & 8.23\% & 34.05\% & 60  & 6  & 554 & 500  & - \\
		& 40\% & 2.67\% & 8.90\% & 75.58\% & 56  & 6 & 566 & 501  & - \\
		\midrule
		\multirow{7}{*}{\rotatebox{90}{\textbf{Diabetes}}} &
		10\% & 4.44\% & 9.44\% & 3.46\% & 16749& 484 & 65396 & 1035 & 16748 \\
		& 15\% & 8.01\% & 6.88\% & 8.94\% & 1658 & 34  & 11491 & 971 & - \\
		& 20\% & 11.17\% & 14.41\% & 15.87\% & 408 & 11 & 5203 & 872  & - \\
		& 25\% & 15.55\% & 19.05\% & 23.35\% & 158  & 9  & 3047& 922  & - \\
		& 30\% & 14.94\% & 24.62\% & 43.00\% & 104  & 6  & 2849 & 896 & - \\
		& 35\% & 14.72\% & 29.42\% & 16.78\% & 96 & 6 & 2907 & 875  & - \\
		& 40\% & 14.67\% & 25.78\% & 23.26\% & 84  & 7  & 2847 & 875  & - \\
		\bottomrule
	\end{tabular}
\end{table}

\subsection{Results: with normalization}

We choose the same sensitive types for each dataset as in Section~\ref{sec:empirical}, but experiment on the normalized dataset where each feature is normalized to be within $[0,1]$.
The results may be found in Tables~\ref{tab:kmedian_normalize} and~\ref{tab:kmeans_normalize}.
The empirical error and the size of our coreset can be much larger than that without normalization under the same parameter $\eps$.
In particular, for $\eps=10\%$ and the \textbf{Adult} dataset, the empirical error becomes 2x, and the size becomes 20x to that without normalization, for both fair $k$-median and $k$-means.
Moreover, the empirical error of \textbf{Uni} may sometimes be better than our coreset for the \textbf{Diabetes} dataset.
An explanation is that, with normalization, the feature vectors tend to be of a similar norm so they distribute around a hyper-sphere. This makes \textbf{Uni} perfectly suitable for the dataset, as a uniform sampling gives a decent coreset. On the other hand, \textbf{BICO} and our algorithm need to include an $\eps$-net on the sphere, which is of large size.
As a result, uniform sampling could offer superior performance in this case, while our algorithm and \textbf{BICO} can not do better.
Another observation from the tables is that for datasets \textbf{Adult} and \textbf{Bank}, the empirical error of our coreset is smaller than that of \textbf{BICO} when $\eps> 20\%$ but larger when $\eps\leq 20\%$.
A possible explanation is that the way our algorithm works might not capture the pattern of the normalized datasets.
Recall that our algorithm emits rays and project points such that the projection cost is bounded, but we find this part becomes a bottleneck when $\eps$ is small.
Intuitively, if the dataset is well clustered around a few lines, our algorithm should offer superior performance; however, this might not be the case for a dataset that tends to be around a hyper-sphere.

\begin{table}[ht]
	\centering
	\captionsetup{font=small}
	\small
	\caption{performance of $\eps$-coresets for fair $k$-median with normalization w.r.t. varying $\eps$.
	}
	\label{tab:kmedian_normalize}
	\begin{tabular}{cccccccc}
		\toprule
		& \multirow{2}{*}{$\eps$} & \multicolumn{2}{c}{emp. err.} & \multirow{2}{*}{size} &  \multirow{2}{*}{$T_{S}$ (ms)} & \multirow{2}{*}{$T_{C}$ (ms)} & \multirow{2}{*}{$T_{X}$ (ms)} \\
		& &  Ours & \textbf{Uni} & & & &  \\
		\midrule
		\multirow{7}{*}{\rotatebox{90}{\textbf{Adult}}} &
		10\% & 1.08\% & 5.31\% & 22483 & 2085 & 10506 & 7138 \\
		& 15\% & 1.66\% & 3.77\% & 14388  & 1179  & 4835 & - \\
		& 20\% & 2.43\%  & 3.05\%  & 9396  & 643  & 2562   & - \\
		& 25\% & 3.28\% & 1.68\% &  5828 & 361  & 1642  & - \\
		& 30\% & 4.39\% & 3.57\% & 4111   & 244 & 1271  & - \\
		& 35\% & 5.52\% & 2.22\%  & 3409   & 195  & 1125   & - \\
		& 40\% & 6.26\%  & 1.45\%  & 2100   & 113  & 959  & -  \\
		\midrule
		\multirow{7}{*}{\rotatebox{90}{\textbf{Bank}}} &
		10\% & 1.28\% & 11.38\% & 3503  & 165 &1604 & 5286 \\
		& 15\% & 2.90\% & 8.62\% & 1529  & 70  & 938  & - \\
		& 20\% & 4.32\%  & 5.03\%  & 863  & 39  & 696   & - \\
		& 25\% & 7.73\% & 4.13\% &  526  & 25 & 595  & - \\
		& 30\% & 9.52\% & 27.06\% & 329   & 18  & 524   & - \\
		& 35\% & 10.26\% & 15.78\% & 226  & 13 & 486   & - \\
		& 40\% & 9.00\%  & 29.16\%  & 216    & 13 & 492 & -  \\
		\midrule
		\multirow{7}{*}{\rotatebox{90}{\textbf{Diabetes}}} &
		10\% & 0.67\% & 8.29\% & 76380  & 9306  & 96198  & 15303 \\
		& 15\% & 1.15\% & 26.08\% & 55761  & 6321  & 35494  & - \\
		& 20\% & 1.91\%  & 11.04\%  & 34182  & 3560 & 14650  & - \\
		& 25\% & 2.96\% & 2.98\% &  18085 & 1780  & 7076  & - \\
		& 30\% & 4.10\% & 1.84\% & 10834  & 854  & 4112   & - \\
		& 35\% & 5.37\% & 2.12\% & 6402  & 418  & 2712   & - \\
		& 40\% & 6.52\%  & 2.42\%  & 3968   & 234 & 2044 & -  \\
		\midrule
		\multirow{7}{*}{\rotatebox{90}{\textbf{Athlete}}} &
		10\% & 1.25\% & 1.80\% & 3472  & 91  & 5719  & 76081 \\
		& 15\% & 1.99\% & 2.78\% & 1372  & 35  & 3340  & - \\
		& 20\% & 2.94\%  & 5.75\%  & 678 & 20 & 2407  & - \\
		& 25\% & 3.85\% & 7.69\% & 381 & 13  & 1902  & - \\
		& 30\% & 4.57\% & 4.62\% & 208 & 9 & 1582 & - \\
		& 35\% & 6.95\% & 9.22\% & 173 & 9 & 1525 & - \\
		& 40\% & 7.37\%  & 11.96\%  & 117 & 8 & 1406 & -  \\
		\bottomrule
	\end{tabular}
	
\end{table}

\begin{table}[ht]
	\centering
	\captionsetup{font=small}
	\small
	\caption{performance of $\eps$-coresets for fair $k$-means with normalization w.r.t. varying $\eps$.
	}
	\label{tab:kmeans_normalize}
	
	\begin{tabular}{ccccccccccccccc}
		\toprule
		& \multirow{2}{*}{$\eps$} & \multicolumn{3}{c}{emp. err.} & \multirow{2}{*}{size} &  \multirow{2}{*}{$T_{S}$ (ms)} & \multicolumn{2}{c}{$T_{C}$ (ms)} & \multirow{2}{*}{$T_{X}$ (ms)} \\
		& &  Ours  & \textbf{BICO} & \textbf{Uni} & &  & Ours  & \textbf{BICO} &  \\
		\midrule
		\multirow{7}{*}{\rotatebox{90}{\textbf{Adult}}} &
		10\% & 3.39\% & 1.61\% & 3.45\% & 17231  & 1533 &  21038 & 3141 & 17224 \\
		& 15\% & 6.78\% & 3.67\% & 14.14\% & 8876  & 595  & 8178 & 1584  & - \\
		& 20\% & 10.93\% & 7.90\% & 9.38\% & 4087  & 228  & 4047 & 1000  & - \\
		& 25\% & 13.87\% & 12.63\% & 7.12\% & 2213  & 116  & 2652 & 884  & - \\
		& 30\% & 18.36\% & 19.24\% & 4.20\% & 1113  & 56  & 1799 & 844  & - \\
		& 35\% & 19.80\% & 24.98\% & 9.03\% & 791  & 38 & 1237 & 782  & - \\
		& 40\% & 19.64\% & 23.00\% & 3.88\% & 759  & 38 & 1229 & 782  & - \\
		\midrule
		\multirow{7}{*}{\rotatebox{90}{\textbf{Bank}}} &
		10\% & 15.90\% & 11.88\% & 54.84\% & 254  & 15 & 453 & 719  & 4800 \\
		& 15\% & 15.40\% & 16.53\% & 50.86\% & 213  & 13  & 470 & 735 & - \\
		& 20\% & 16.71\% & 16.41\% & 42.25\% & 186& 12  & 470 & 719 & - \\
		& 25\% & 15.25\% & 23.19\% & 42.86\% & 173 & 12  & 448 & 719  & - \\
		& 30\% & 15.23\% & 15.25\% & 44.19\% & 166  & 12  & 455 & 718  & - \\
		& 35\% & 14.99\% & 18.20\% & 54.96\% & 149 & 11& 455 & 719  & - \\
		& 40\% & 14.86\% & 22.81\% & 33.25\% & 149  & 12 & 454 & 765 & - \\
		\midrule
		\multirow{7}{*}{\rotatebox{90}{\textbf{Diabetes}}} &
		10\% & 4.88\% & 9.85\% & 30.26\% & 51841 & 5601 & 111971 & 8891  & 14893 \\
		& 15\% & 8.80\% & 4.65\% & 1.90\% & 20188  & 1798  & 22840 & 3390   & - \\
		& 20\% & 12.72\% & 13.11\% & 2.63\% & 6520  & 372 & 7644 & 1907  & - \\
		& 25\% & 16.96\% & 20.99\% & 3.67\% & 2861  & 144 & 4251 & 1610  & - \\
		& 30\% & 21.03\% & 28.79\% & 6.32\% & 1399  & 72  & 2421 & 1547  & - \\
		& 35\% & 23.23\% & 33.81\% & 8.17\% & 886 & 45  & 1526 & 1521  & - \\
		& 40\% & 27.10\% & 42.67\% & 8.49\% & 478  & 28  & 982 & 1516 & - \\
		\midrule
		\multirow{7}{*}{\rotatebox{90}{\textbf{Athlete}}} &
		10\% & 5.43\% & 3.06\% & 5.50\% & 2350 & 56 & 20560 & 1362  & 72642 \\
		& 15\% & 8.27\% & 9.95\% & 7.32\% & 519 & 15 & 7748 & 1241 & - \\
		& 20\% & 14.18\% & 12.83\% & 23.49\% & 224 & 10 & 3413 & 1216  & - \\
		& 25\% & 15.31\% & 19.06\% & 14.49\% & 127 & 9 & 2059 & 1206  & - \\
		& 30\% & 18.16\% & 22.53\% & 16.78\% & 88 & 7 & 1537 & 1192  & - \\
		& 35\% & 17.95\% & 24.08\% & 26.39\% & 82 & 7  & 1527 & 1224  & - \\
		& 40\% & 18.03\% & 22.41\% & 25.87\% & 74 & 7 & 1539 & 1216 & - \\
		\bottomrule
	\end{tabular}
\end{table}
	
\end{document}